\documentclass[sigconf,nonacm,balance=false]{acmart}
\AtBeginDocument{%
  }

\usepackage{amsmath,array,booktabs,graphicx,tikz}
\usepackage{enumitem}

\setcopyright{none}
\renewcommand\footnotetextcopyrightpermission[1]{} 

\usepackage{newfloat}
\usepackage{listings}
\DeclareCaptionStyle{ruled}{labelfont=normalfont,labelsep=colon,strut=off} 
\floatstyle{ruled}
\newfloat{listing}{tb}{lst}{}
\floatname{listing}{Listing}

\DeclareMathOperator{\TV}{TV}
\DeclareMathOperator{\KL}{KL}
\newcommand{\E}{\mathbb{E}}

\newcommand{\T}{\mathsf{T}}

\newcommand{\method}{ADLER}
\usepackage{booktabs}

\title{Answer-Only Disclosure: Private-Context LLM Assistance over Web APIs with a Leakage Ledger}

\author{Liang Yuan}
\correspondingauthor
\email{liang.yuan@acu.edu.au}
\affiliation{%
  \institution{Australian Catholic University}
  \city{North Sydney}
  \state{NSW}
  \country{Australia}
}

\author{Xiaohu Du}
\affiliation{%
  \institution{Ant Group}
  \city{}
  \country{China}}

\author{Ruikun Luo}
\affiliation{%
  \institution{University of Macau}
  \city{Macau}
  \country{China}
}

\author{Jinze Wang}
\affiliation{%
 \institution{Swinburne University of Technology}
 \city{Melbourne}
 \state{VIC}
 \country{Australia}}

\author{Mashud Rana}
\affiliation{%
  \institution{Australian Catholic University}
  \city{North Sydney}
  \state{NSW}
  \country{Australia}}

\author{Kaibin Wang}
\affiliation{%
  \institution{Anhui Polytechnic University}
  \city{}
  \state{Anhui}
  \country{China}}

\author{Jingwen Zhou}
\affiliation{%
  \institution{University of Western Sydney}
  \city{Sydney}
  \country{Australia}}

\author{Bo Li}
\affiliation{%
  \institution{Victoria University}
  \city{Melnourne}
  \country{Australia}}

\author{Xiaoyu Xia}
\affiliation{%
  \institution{RMIT University}
  \city{Melnourne}
  \country{Australia}}

\begin{document}

\begin{abstract}
Web assistants increasingly answer questions about a user's private documents by pairing a small on-device model with a cloud LLM behind a Web API. In collaborative decoding, the document stays on the client, and the cloud receives only a public view that the user approves, such as the question alone, but the two models exchange messages while the answer is generated. We show that the message transcript can disclose private content that neither the public view nor the answer contains. In a planted-secret audit, speculative decoding with private drafts did not include the secret code in its answers, yet sent the code to the cloud in rejected drafts. We present \method{}, a decoding protocol for streaming LLM APIs in which every message to the cloud depends only on the public view and the answer the user sees, regardless of how the client combines the two models or how the cloud behaves. A client-side leakage ledger then bounds how much the answer itself reveals relative to a public reference process built from the cloud's replies. In the audit, no \method{} configuration exposed the secret beyond the answer. On GSM8K math problems with numerical values masked in the cloud's input, \method{} with an unlimited leakage budget raises the exact-match accuracy of a 0.5B client from 0.41 to 0.83 and of a 1.5B client from 0.67 to 0.85 under lenient answer parsing, close to the 0.92 achieved by uploading the full problem. It also exceeds a zero-leakage program baseline (0.69) by 0.14 and 0.16, respectively. In this baseline, the cloud writes a program that the client runs offline using the private numbers. The ledger records the disclosure charge associated with this gain; budgets of 16 nats or less leave accuracy near the cloud-only baseline on GSM8K.
\end{abstract}

\keywords{Web LLM APIs, privacy-preserving inference, collaborative decoding, speculative decoding,
  information leakage, R\'enyi divergence}

\begin{CCSXML}
<ccs2012>
   <concept>
       <concept_id>10002978.10002991.10002995</concept_id>
       <concept_desc>Security and privacy~Privacy-preserving protocols</concept_desc>
       <concept_significance>500</concept_significance>
       </concept>
   <concept>
       <concept_id>10002951.10003260.10003304</concept_id>
       <concept_desc>Information systems~Web services</concept_desc>
       <concept_significance>500</concept_significance>
       </concept>
   <concept>
       <concept_id>10010147.10010178.10010179</concept_id>
       <concept_desc>Computing methodologies~Natural language processing</concept_desc>
       <concept_significance>500</concept_significance>
       </concept>
 </ccs2012>
\end{CCSXML}

\ccsdesc[500]{Security and privacy~Privacy-preserving protocols}
\ccsdesc[500]{Information systems~Web services}
\ccsdesc[500]{Computing methodologies~Natural language processing}

\maketitle
\pagestyle{plain}

\section{Introduction}
\label{sec:intro}
Web assistants increasingly operate on data that never belonged on a server, such as a mailbox read by a browser extension, notes in a personal knowledge base, or a page in a patient portal. The capable model, however, is a cloud endpoint behind a Web API. A common compromise is \emph{collaborative decoding}. A small local model reads the private context, the cloud model reads a sanitised \emph{public view} that the user approves (for example, the question alone), and the two exchange per-token signals while the answer is generated. Examples include speculative drafting~\cite{lv2026specsteersynergizinglocalcontext,liang2026asymspec}, logit upload~\cite{zhang2024cogenesis} and distribution fusion~\cite{zhang2026coveil,thareja2026dpfusion}. Their privacy argument usually concerns the prompt, since the cloud never receives the document. This paper raises a Web security and privacy question. What does a third-party Web API endpoint actually learn from the requests of such a client, and how can that be bounded?

The prompt-level argument overlooks the messages exchanged during decoding. Consider speculative decoding~\cite{leviathan2023fast}, which speeds up a large model by letting a small model guess the next few tokens, called a \emph{draft}. The large model checks the draft, keeps the tokens it agrees with and replaces the first one it rejects with its own. In edge--cloud deployments, the small model runs on the client, where it can read the private document. Figure~\ref{fig:teaser}a shows the consequence. The document is a bank statement and the user asks whether a \$5,000 transfer is possible. The small model drafts ``Yes, your balance of \$12,480 covers it'', and the cloud, which sees only the question, rejects the draft and answers ``It depends on your balance.'' Correctness is not at stake, since the correction samples exactly from the cloud's distribution given the public view. Privacy is at stake because every draft, including rejected ones, is sent to the server.

\begin{figure*}
\centering
\includegraphics[width=\textwidth]{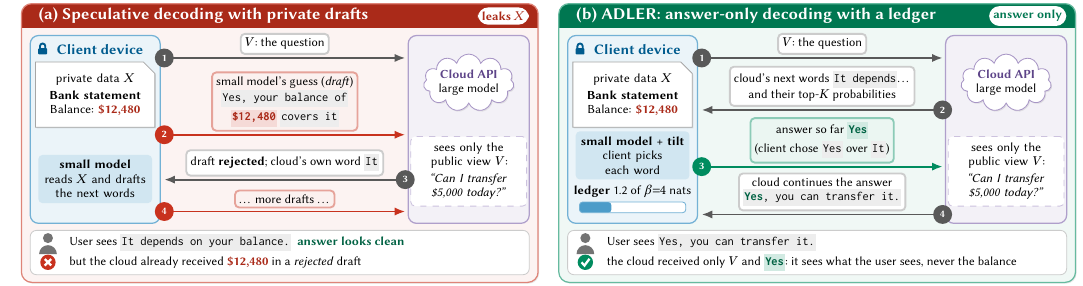}
\caption{Messages exchanged while answering a question about a private document $X$, top to bottom; the cloud’s initial input is the public view $V$. (a)~Speculative decoding with private drafts leaks the secret through a rejected
draft. (b)~\method{} sends only prefixes of the answer, and a client-side ledger caps what the answer reveals.}
\Description{Two message timelines between a client and a cloud API. Left: after the question, the client
sends a draft containing the account balance from a private bank statement; the cloud rejects it, and the user sees a clean answer, but the cloud has received the balance. Right: the cloud streams its own next words, and the client sends back only the answer so far when it picks a different word (Yes instead of It), so the cloud receives only a prefix of the final answer and never the balance; a ledger shows the budget spent.}
\label{fig:teaser}
\end{figure*}

This leak is invisible in the answer. A user or an auditor checking outputs sees a clean response, while the server log says otherwise. Redacting the prompt does not help, because the drafts are written after redaction by a model that read the original. What a cloud learns depends on its \emph{transcript}, everything it receives while it decodes the answer, and in common designs the transcript carries more than the answer (\S\ref{sec:leak}). Logit uploads place a function of the private context in every message. Even drafts taken from the cloud itself cause trouble once the client accepts or rejects them, because a privacy budget on each emitted token then no longer bounds what the cloud observes. Edge-side fusion, in which the client returns only the token it sampled~\cite{zhang2026coveil}, avoids both problems, but it costs one round trip per token and gives no formal bound on what the answer reveals. CoVeil reduces such exposure empirically, and DP-Fusion~\cite{thareja2026dpfusion} bounds a single model's output, but neither bounds the whole interactive transcript. A collaborative design for Web APIs therefore needs three properties. 1) \emph{What the user sees should be what the server sees}, so that every message to the cloud is a function of the public view and the answer, and auditing the answer audits the whole exchange. 2) Round trips should be few. And 3) because the answer is meant to depend on the private context, the client needs a bound on how much the answer itself reveals.

We present \emph{Answer-only Decoding with Ledgered Exponential Reweighting} (\method{}) to meet these requirements. Its protocol, \emph{restart streaming}, shown in Figure~\ref{fig:teaser}b, works as follows. The cloud streams its own most likely continuation, together with the probabilities of its top-$K$ candidate tokens at each position. The client samples every token itself and, at the first token that differs from the cloud's, sends the answer so far and asks the cloud to continue. In Figure~\ref{fig:teaser}b, the client prefers ``Yes'' to the cloud's ``It'', so the cloud receives only the prefix ``Yes''. The answer uses the private document, while the client transmits only answer prefixes. The cloud therefore never receives drafts, local probabilities, or rejected tokens, and the transcript depends only on the public view and the answer, regardless of how the two models are combined or how the cloud behaves. To choose each token, the client \emph{tilts} the cloud's distribution, reweighting the candidates toward the evidence its local model extracts from the private context; with no tilt, it samples a public reference built only from the cloud's reply. A \emph{leakage ledger} charges every token for how far the tilt moves the answer away from a public reference process built from the cloud's replies, measured in nats (the natural-log unit of information, about 1.44 bits~\cite{cover2006elements}), and stops tilting once a per-answer budget $\beta$ is spent. The budget bounds both the answer's mutual information with the private context and the cloud's chance of guessing a secret.

We evaluate transcript disclosure and the utility–disclosure trade-off. In a planted-secret audit, which hides a random account code in a document and asks for it, no \method{} configuration exposed the code beyond the answer, and budgets of 4 or 16 nats kept it out of the answer altogether. 
On GSM8K math problems with every number masked in the cloud's input, \method{} with an unlimited leakage budget raises a 0.5B client's accuracy from 0.41 to 0.83, and a 1.5B client's from 0.67 to 0.85, against 0.92 for uploading the full problem, under lenient parsing that also accepts final answers in other common formats. The ledger makes the cost explicit. Our step-by-step solutions use the real numbers and incur an average charge of 52–72 nats per answer; budgets of 16 nats or less remain near restricted-cloud performance, and 64 nats recover about two-thirds of the gain. Designs that send nothing derived from the numbers recover much of this gain, but not all of it. A program written by the cloud over the masked numbers and run offline by the client reaches 0.69, significantly below \method{} with either client, and post-editing the cloud's written solution does worse than the local model alone. 
On multi-hop QA over private documents, the small local models add less evidence, and \method{} improves on linear-mixture fusion but not on the better of cloud-only and local-only decoding; where it does gain, as on MuSiQue with the question alone, a 16-nat budget keeps that gain. All of this needs only features that open-weight serving stacks expose. Overall, this paper makes the following contributions:
\begin{itemize}[leftmargin=*, labelindent=0pt]
\item We model the cloud's view as the transcript of client messages, show that private-context drafts and logit uploads can disclose information beyond the answer and that coupling to public drafts defeats per-token divergence accounting, and confirm the draft leak in an audit.
\item We propose restart streaming, which generalises edge-side fusion from one round trip per token to one restart per disagreement or exhausted lookahead and keeps every message a function of the public view and the answer, for every fusion rule and every cloud, even a malicious one.
\item We use an exponential-tilt fusion rule that is KL-optimal for its score (not for task accuracy) and includes product-of-experts, geometric pooling, the local-only law and the public reference as special cases, together with a client-enforced ledger, based on Rényi filters~\cite{feldman2021filter}, that bounds mutual information, distinguishability and the advantage of guessing a secret.
\item Experiments on GSM8K, 2WikiMQA and MuSiQue measure the utility--disclosure trade-off against adapted published designs and zero-leakage baselines, and show that restart streaming cuts the modelled communication time per GSM8K answer from 29.6\,s to 5.4\,s compared with token-by-token fusion.
\end{itemize}

\section{Setting and Threat Model}
\label{sec:setting}
A client holds a private context $X$ (documents, fields, history) and a question. A user-approved mechanism $M$ produces the \emph{public view} $V=M(X)$. It may be the question alone, a redacted or truncated document, or retrieved passages. The client runs a small model $q$. The cloud runs a large model $p$ behind an API and is sent $V$. We write $p_t=p(\cdot\mid V,h)$ for the cloud's next-token law at temperature $\tau$; the cloud returns the $K$ largest of these probabilities. The \emph{restricted cloud} (cloud-only) baseline samples from the renormalised distribution. We write $q^X_t=q(\cdot\mid X,h)$, $q^V_t=q(\cdot\mid V,h)$ for the local model with and without private context. All three share a tokenizer. The full-context cloud law $p^\star_t=p(\cdot\mid X,h)$ is an offline reference only.

\textbf{Adversary.} The cloud is \emph{honest-but-curious} by default. Our guarantees also hold against a \emph{malicious} cloud that returns arbitrary distributions and lookahead, because they are enforced client-side. The cloud knows both models, the protocol, $V$ and its own randomness $R$. It has an arbitrary prior over private contexts consistent with $V$. Its view is the \emph{transcript} $\T$: $V$, $R$, and every client message. Timing is treated separately (\S\ref{sec:limits}). The user sees the answer $Y$, the complete emitted text including any intermediate reasoning. We call a protocol \emph{answer-only} if $\T$ is a measurable function of $(V,R,Y)$; then $I(X;\T\mid V,Y,R)=0$ for every prior. The residual question is what $Y$ itself reveals.

\textbf{Guarantee Relative to a Public Reference.} Instead of a neighbouring-context differential privacy guarantee, we bound disclosure relative to a public reference process, the same protocol with $\lambda\equiv0$ sampling from the public reference $\tilde p_t$ of \S\ref{sec:tilt}. Since the user has approved disclosing $V$, the cloud learns at most what the answer carries, and at most $\beta$ nats of that beyond the reference process.

\section{How Collaborative Decoding Leaks}
\label{sec:leak}
\textbf{Private Drafts.} In edge--cloud speculative decoding, the client drafts $\tilde y\sim q^X_t$ and sends the block for verification. Acceptance--residual correction against $p_t$ is exact for $p(\cdot\mid V)$ even when the contexts differ, so no correctness is lost. The drafts, however, are part of $\T$.

\begin{proposition}[Drafts leak beyond the answer]\label{prop:drafts}
There exist $q,p$ for which exact speculative decoding targeting the restricted cloud emits $Y$ independent of $X$ while
$I(X;\T\mid V,Y)=H(X)$.
\end{proposition}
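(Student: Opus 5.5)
The construction will be explicit. Take a binary secret $X\in\{0,1\}$ (or uniform on a finite set $\mathcal{X}$) with $H(X)>0$, a trivial public view $V$ (constant, e.g.\ the question alone), and a vocabulary containing a token $a_x$ for each $x$. The local model is deterministic given the private context, $q^X_t=\delta_{a_X}$, so the single draft token equals $a_X$. For the cloud, take $p_t=p(\cdot\mid V,h)$ supported on a disjoint token set, e.g.\ $p_t=\delta_b$ with $b\notin\{a_x\}$, or any law with $p_t(a_x)=0$ for all $x$.

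First I will check exactness and independence of the output. Under standard acceptance--residual correction, the draft $a_X$ is accepted with probability $\min(1,p_t(a_X)/q^X_t(a_X))=0$. The replacement is drawn from the residual $\propto(p_t-q^X_t)_+=p_t$, so the emitted token has law $p_t$. This holds for every context, consistent with the paper's remark that the correction is exact for $p(\cdot\mid V)$. Since $p_t$ does not depend on $X$ and the correction randomness is independent of $X$, $Y$ is independent of $X$ given $V$. One block with one draft token suffices, and longer answers follow by repeating the step.

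Next I compute the leakage. The transcript $\T$ contains $V$ and the client's verification message, which includes the draft block $a_X$. Because $x\mapsto a_x$ is injective, $X$ is a deterministic function of $\T$, so $H(X\mid \T,V,Y)=0$. Independence gives $H(X\mid V,Y)=H(X\mid V)=H(X)$, since $V$ is constant. Therefore
\[
I(X;\T\mid V,Y)=H(X\mid V,Y)-H(X\mid \T,V,Y)=H(X).
\]

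The main obstacle is pinning down what "exact speculative decoding" transmits, namely that rejected drafts really are in $\T$. I will state this by adopting the protocol of \S\ref{sec:leak}, where the client sends the block for verification, so the draft tokens are client messages by definition of $\T$. A second point is whether the cloud's randomness $R$ may be conditioned on. Since $R$ is independent of $X$, conditioning on it does not change either entropy term, and the equality holds with or without $R$. If one insists that $p_t$ be full-support, as with softmax outputs, I would replace the exact statement by a limit, letting $p_t(a_x)\to0$. For exact equality, however, I will allow zero-probability tokens, for example with top-$K$ truncation.
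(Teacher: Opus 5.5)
Your proposal is correct and is essentially the paper's own proof: a deterministic draft $q(\cdot\mid X)=\delta_{a_X}$ against the restricted cloud $p=\delta_b$ is never accepted, the residual is $\delta_b$ so $Y=b$ for every $X$, and the draft $a_X$ in $\T$ determines $X$, which gives $H(X\mid \T,V,Y)=0$ and $I(X;\T\mid V,Y)=H(X)$. Your closing hedge about full-support $p_t$ is unnecessary, because acceptance--residual correction gives $Y\sim p(\cdot\mid V)$ for every value of $X$ even when $a_X$ can be accepted, so $Y$ stays independent of $X$ and the equality holds exactly without any limit.
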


A local model that always drafts a token encoding $X$, paired with a cloud that rejects every such token, attains this; Appendix~\ref{app:proofs} gives the proof.

This is not only a worst case. In the planted-secret audit of \S\ref{sec:audit}, exact private-draft speculative decoding never emitted the secret code, yet all six digits reached the cloud inside rejected drafts in 5 of 48 sessions, a lower bound since partial exposures are not counted; the restricted cloud exposed it in none. Uploading local logits or top-$B$ probabilities~\cite{zhang2024cogenesis} places a deterministic function of $X$ in $\T$ at every step; CoVeil measures and mitigates this empirically~\cite{zhang2026coveil}.

\textbf{Public Drafts with Coupling.} One might instead draft from a public source, say the cloud's own sample $D\sim p_t$, and accept or resample it against a target law $\mu_t$ that uses $X$, which is exact for any $\mu_t$.

\begin{proposition}[Coupling defeats divergence accounting]\label{prop:coupling}
Let the client accept the public draft $D\sim p_t$ with probability $\min\{1,\mu_t(D)/p_t(D)\}$ and otherwise sample the residual, where $\mu_t$ may depend on $X$. Under the reference process ($\mu_t=p_t$), $Y_t=D$ almost surely, while $\Pr[Y_t\neq D]=\TV(\mu_t,p_t)$. Whenever $\mu_t\neq p_t$, the divergence of the cloud's joint view $(D,Y_t)$ from the reference is therefore infinite for every R\'enyi order $\alpha\ge1$, even when $\KL(\mu_t\|p_t)<\infty$; only an additive guarantee, bounded by $\sum_t\TV(\mu_t,p_t)$, survives.
\end{proposition}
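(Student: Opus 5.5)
We analyse a single decoding step $t$ with the history $h$ fixed, and then extend to the whole answer by composition. The per-step argument has three parts.

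\textbf{Reference process.} When $\mu_t=p_t$, the acceptance probability $\min\{1,\mu_t(D)/p_t(D)\}$ equals $1$ for every $D$ in the support of $p_t$. Hence $Y_t=D$ almost surely, and the joint law of the cloud's view $(D,Y_t)$ is supported on the diagonal $\{(d,d)\}$ with mass $p_t(d)$.

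\textbf{General $\mu_t$.} We invoke the standard maximal-coupling property of acceptance--residual sampling, the same fact that underlies exactness in speculative decoding. Summing the acceptance mass gives
\[
\Pr[Y_t=D]=\sum_d p_t(d)\min\{1,\mu_t(d)/p_t(d)\}=\sum_d\min\{p_t(d),\mu_t(d)\}=1-\TV(\mu_t,p_t).
\]
Moreover, $Y_t$ is marginally distributed as $\mu_t$, because the residual $(\mu_t-p_t)_+/\TV(\mu_t,p_t)$ exactly fills the deficit. The residual puts no mass where $\mu_t\le p_t$, so every rejection lands on a symbol $y$ with $\mu_t(y)>p_t(y)$. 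Under $p_t$, however, $\Pr[D=y]>0$ while the drafted symbol was rejected, so $y\neq D$ whenever the step rejects. This gives $\Pr[Y_t\neq D]=\TV(\mu_t,p_t)$.

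\textbf{Infinite divergence.} If $\mu_t\neq p_t$, then $\TV(\mu_t,p_t)>0$, so the real joint law $P$ of $(D,Y_t)$ puts positive mass on the off-diagonal set $\{D\neq Y_t\}$, while the reference law $Q$ puts zero mass there. Thus $P\not\ll Q$. For every order $\alpha\ge1$, $D_\alpha(P\|Q)=\infty$: for $\alpha=1$ this is KL, and for $\alpha>1$ the integral of $p^\alpha q^{1-\alpha}$ over a set with $q=0$ and $p>0$ diverges. This holds regardless of whether $\KL(\mu_t\|p_t)$ is finite, since the marginal of $Y_t$ alone has divergence $\KL(\mu_t\|p_t)$ but the joint view does not. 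The conclusion persists for the full transcript: if any step has infinite divergence, conditioning on the history and applying the chain rule makes the total divergence infinite too.

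\textbf{Additive guarantee.} For the surviving additive bound, we construct a coupling of the real run and the reference run that uses the same drafts $D_1,D_2,\dots$ and the same history up to the first step $t$ with $Y_t\neq D_t$. On the event that no step disagrees, the two transcripts coincide. By a union bound,
\[
\TV(\text{view},\text{reference view})\le\Pr[\exists t:\ Y_t\neq D_t]\le\sum_t\E\,\TV(\mu_t,p_t),
\]
where each term is evaluated along the shared history. This yields an $(0,\delta)$-type guarantee with $\delta=\sum_t\TV(\mu_t,p_t)$.

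The main obstacle is the last step: the $\mu_t$ are adaptive, depending on $X$ and on earlier outputs. To handle this, I would first argue conditionally on a fixed history and then apply the tower property, stating the bound in expectation, or as a worst case over histories if a deterministic bound is wanted.
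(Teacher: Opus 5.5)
Your proposal is correct and follows the paper's proof essentially step for step. The shared steps are: the acceptance mass $\sum_d\min\{p_t(d),\mu_t(d)\}=1-\TV(\mu_t,p_t)$; the fact that a rejected draft is never emitted, so acceptance coincides with $\mathbf 1\{Y_t=D\}$; the failure of absolute continuity on $\{Y_t\neq D\}$ for every order $\alpha\ge1$; and a coupling on the no-rejection event with a union bound over positions.

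One justification needs repair. The sentence ``$\Pr[D=y]>0$ while the drafted symbol was rejected, so $y\neq D$'' does not establish the claim. The actual reason, as in the paper, is that a draft $d$ with $\mu_t(d)\ge p_t(d)$ is accepted with probability one. Rejection therefore forces $\mu_t(D)<p_t(D)$, which is disjoint from the residual's support $\{\mu_t>p_t\}$. Your first display, which equates $\Pr[Y_t=D]$ with the acceptance probability, silently uses this fact, so the argument should come before it.
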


Appendix~\ref{app:proofs} gives the proof. The accept/reject decision equals $\mathbf 1\{Y_t=D\}$, which the cloud computes from its own randomness and the answer, so public-draft coupling is answer-only in the sense of \S\ref{sec:setting}. The failure is in accounting. Coupling makes the draft, independent of $X$ on its own, and the emitted token jointly informative about $X$, so marginal token-divergence accounting does not control the cloud's joint view.

\begin{figure*}[t]\centering
\includegraphics[width=\textwidth]{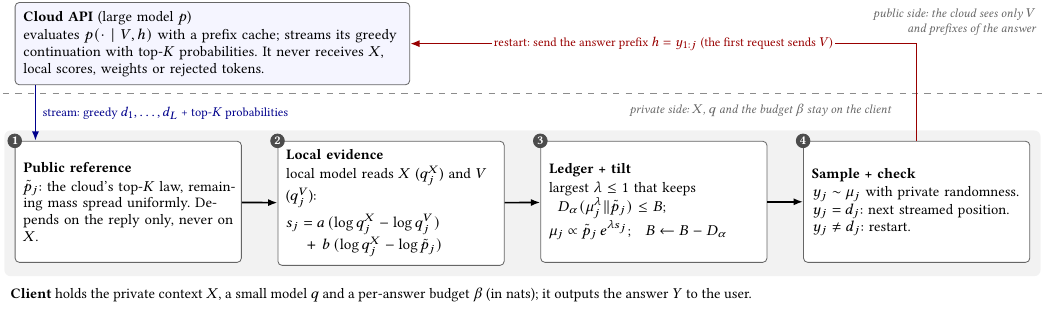}
\caption{One round of \method{}. The cloud sees only the public view $V$ and prefixes of the answer; the
private context $X$, local scores, weights and rejected tokens never leave the client.}
\label{fig:protocol}
\Description{Two lanes separated by a dashed privacy boundary. Top: the cloud API, which sees only the public view and answer prefixes. Bottom: four client steps, public reference, local evidence, ledger and tilt, sample and check, with a stream arrow down from the cloud and a restart arrow carrying the answer prefix back up.}
\end{figure*}

\section{\method{}: Answer-Only Decoding}
\label{sec:method}
\subsection{Restart Streaming}
\label{sec:streaming}

The client sends the public view $V$ once; our implementation keeps it and the answer prefix in a server-side session. The cloud runs its model on $V$ and streams its own greedy continuation $d_1,d_2,\dots$, returning with each token its top-$K$ next-token probabilities $p_j$ at temperature $\tau$; the greedy token is the most probable one, so a single forward pass yields both. 
For each streamed position $j$, the client forms its own sampling law $\mu_j$ from $p_j$ and its private evidence and draws $y_j\sim\mu_j$ with private randomness. While $y_j=d_j$, the stream continues. 
At the first $y_j\neq d_j$, or after $L$ positions, the client sends the answer prefix $y_{1:j}$, from which the cloud resumes using its prefix cache. Our implementation (Hugging Face Transformers, cropped key-value cache) generates each complete lookahead block of up to $L$ positions before client processing, so positions after a disagreement are computed but discarded; the $p_j$ are full-softmax probabilities restricted to the top $K$.
Figure~\ref{fig:protocol} illustrates one round of \method{}. 
The protocol requires three API features, all offered by open-weight serving stacks; hosted APIs currently offer only some of them. These are top-$K$ log-probabilities, continuation from a supplied prefix, and server-side prefix caching.

\begin{theorem}[Answer-only disclosure]\label{thm:answer-only}
For every fusion rule, every cloud strategy (including malicious ones), and every $X$, the transcript of restart streaming is a deterministic function of $(V,R,Y)$. Hence $I(X;\T\mid V,R,Y)=0$.
\end{theorem}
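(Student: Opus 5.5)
We argue by induction over the rounds of restart streaming, showing that the cloud's entire state and every client message can be reconstructed from $(V,R,Y)$ alone. Model the cloud as an arbitrary (possibly malicious) deterministic function of its inputs and its own randomness $R$. In round $r$, its output is the stream $(d^{(r)}_j,p^{(r)}_j)_{j\le L}$, which is $F_r(V,R,m_1,\dots,m_{r-1})$, where $m_1,\dots,m_{r-1}$ are the client's earlier messages. The key observation is that the client's messages consist only of $V$ (once) and answer prefixes $y_{1:j}$. Fusion rules, local scores, private randomness, and rejected tokens only affect \emph{which} prefix is sent, never what else is sent. So we must show that the cut points, i.e.\ the positions at which restarts occur, are themselves determined by $(V,R,Y)$.

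The inductive step runs as follows. Suppose the messages $m_1,\dots,m_{r-1}$ have been reconstructed from $(V,R,Y)$, and the last one is the prefix $y_{1:s}$, with $s=0$ initially. Evaluate $F_r$ to obtain the cloud's stream $d^{(r)}_1,d^{(r)}_2,\dots$ for positions $s+1,s+2,\dots$. By the protocol, the client sends its next restart at the first $j\ge1$ with $y_{s+j}\neq d^{(r)}_j$, or at $j=L$ if none occurs. Because $Y$ is the complete emitted answer, this index is computable from $Y$ and the stream. The message is then $m_r=y_{1:s+j}$, a function of $(V,R,Y)$. End of answer needs separate handling. Either $Y$ ends (an EOS token or length cap, both visible in $Y$) inside a round, or the final flush is again a prefix of $Y$, namely $Y$ itself. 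Either way, the number of rounds is also recoverable from $Y$. Timing side channels are excluded by the threat model. The transcript $\T=(V,R,m_1,\dots)$ is thus a measurable function $G(V,R,Y)$, and since this construction never used the fusion rule or the cloud's honesty, it holds uniformly over both.

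Conditional independence then follows immediately. Given $(V,R,Y)$, $\T$ is almost surely constant, so $H(\T\mid V,R,Y)=0$. It follows that $I(X;\T\mid V,R,Y)\le H(\T\mid V,R,Y)=0$. For general alphabets, the same conclusion follows from the chain rule, $I(X;\T,Y\mid V,R)=I(X;Y\mid V,R)$, because $\T$ is a function of the conditioning variables.

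The main obstacle is making the cloud model precise enough that "malicious" covers adaptive behaviour. In particular, the cloud's stream may depend on the whole history of prefixes and on $R$, but it must not depend on anything client-private. It also requires that the stopping rule compare only $y$ against $d$ and not depend on $\mu_j$. I would state explicitly that the client's randomness enters only through $Y$. I would also verify that "compute but discard" lookahead positions create no extra messages, since the server computes them itself.
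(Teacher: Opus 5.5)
Your proposal is correct and matches the paper's proof, which runs the same recursion: each cloud reply is a function of $V$, $R$ and the earlier prefixes, and each next message is the prefix of $Y$ cut at the first disagreement with the streamed tokens or at the end of the lookahead, so $(V,R,Y)$ fixes every message. Your handling of termination and of the final information-theoretic step spells out what the paper leaves implicit; for a malicious cloud, state the cut as the end of whatever block it actually streams rather than $j=L$, which your argument covers unchanged.
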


Each message is a prefix of $Y$ whose end is fixed by the cloud's replies, which depend only on $V$, $R$ and earlier prefixes; Appendix~\ref{app:proofs} gives the proof. The theorem concerns message content; timing is outside the model (\S\ref{sec:limits}).

\begin{lemma}[Deterministic drafts need no coupling]\label{lem:unique}
If the draft is deterministic ($\pi=\delta_d$), the only kernel $K(y\mid d)$ with $\sum_{d'}\pi(d')K(y\mid d')=\mu(y)$ for all $y$ is $K(\cdot\mid d)=\mu$. Its acceptance probability $\mu(d)$ equals that of maximal coupling.
\end{lemma}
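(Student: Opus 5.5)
The proof comes down to evaluating the marginal constraint at a single point mass and then comparing against the standard formula for maximal coupling.

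\emph{Uniqueness.} With $\pi=\delta_d$, the constraint $\sum_{d'}\pi(d')K(y\mid d')=\mu(y)$ has only one nonzero term, the one with $d'=d$. So for every $y$ it reads $K(y\mid d)=\mu(y)$, which forces $K(\cdot\mid d)=\mu$. The kernel is left unconstrained at drafts $d'\neq d$, but those drafts occur with probability zero. I will therefore state uniqueness $\pi$-almost surely, that is, at the only draft that is ever realised. Conversely, $K(\cdot\mid d)=\mu$ satisfies the constraint, so a valid kernel exists and it is this one.

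\emph{Acceptance probability of this kernel.} The draft is accepted exactly when $y=d$. Under $K(\cdot\mid d)=\mu$ this happens with probability $\mu(d)$.

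\emph{Comparison with maximal coupling.} Any coupling of $\pi$ and $\mu$ satisfies
\[
\Pr[Y=D]\le 1-\TV(\pi,\mu)=\sum_y\min\{\pi(y),\mu(y)\},
\]
and the maximal coupling attains this bound. For $\pi=\delta_d$ every term vanishes except $y=d$, which gives $\min\{1,\mu(d)\}=\mu(d)$.

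The same value appears in the acceptance rule used in Proposition~\ref{prop:coupling}. There a draft is accepted with probability $\min\{1,\mu(d)/\pi(d)\}=\min\{1,\mu(d)\}=\mu(d)$. After a rejection the residual is proportional to $(\mu-\delta_d)^+$, which is $\mu$ restricted to $y\neq d$ and renormalised. Mixing the two outcomes reproduces $\mu$ exactly.

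The one point that needs care is the meaning of ``only'' in the uniqueness claim. Off the support of $\pi$ the kernel is not determined at all, so the statement must be read up to $\pi$-null sets. With a point-mass draft, that means the kernel is pinned down only at $d$. Beyond this caveat the argument is direct.

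The conclusion to record is that in restart streaming, the client samples $y_j$ from $\mu_j$ and compares it with the greedy token $d_j$, and this is already both the unique valid kernel and the maximal coupling. Unlike in Proposition~\ref{prop:coupling}, there is no additional randomness linking the draft and the emitted token.
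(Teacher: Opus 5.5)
Your proof is correct and follows the paper's argument. With $\pi=\delta_d$ the marginal constraint reduces to $K(\cdot\mid d)=\mu$, and the maximal-coupling acceptance probability (the paper writes it as $\pi(d)\min\{1,\mu(d)/\pi(d)\}$, you write it as $\sum_y\min\{\pi(y),\mu(y)\}$) equals $\mu(d)$. Your caveat that uniqueness holds only at the realised draft $d$, i.e.\ up to $\pi$-null sets, is a correct sharpening that the paper leaves implicit.
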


Appendix~\ref{app:proofs} gives the proof. Restart streaming therefore needs no coupling and avoids Proposition~\ref{prop:coupling}; for a deterministic draft, it accepts as often as maximal coupling would. A request yields $\sum_{k\ge1}\prod_{j<k}\mu_j(d_j)$ tokens in expectation, capped by $L$. 
In a compact binary encoding (a token id and $K$ id--probability pairs), a reply costs $4+8K$ bytes per streamed position: 516~B for $K{=}64$, against 607~KB for a full-vocabulary probability vector over Qwen2.5's 151{,}665 tokens.

\subsection{Fusion by exponential tilting}
\label{sec:tilt}
Restart streaming works with any $\mu_t$; we build it from two ingredients. 

The first is a \emph{public reference} $\tilde p_t$, formed from the cloud's top-$K$ probabilities (under its full softmax), with the remaining mass, floored at $10^{-9}$, spread uniformly over the rest of the vocabulary and the result renormalised.
$\tilde p_t$ depends only on the cloud reply, never on $X$, yet it gives every token positive mass. 
The client can therefore emit an answer the cloud did not list, and the ledger charges it for doing so. 

The second is a local \emph{score}:
\begin{equation}
s_t(x)=a\,\bigl(\log q^X_t(x)-\log q^V_t(x)\bigr)+b\,\bigl(\log q^X_t(x)-\log\tilde p_t(x)\bigr),
\label{eq:score}
\end{equation}
clipped to $[-c,c]$. The fused law is the exponential tilt
\begin{equation}
\mu_t^{\lambda}(x)\;\propto\;\tilde p_t(x)\,e^{\lambda s_t(x)},\qquad\lambda\in[0,1].\label{eq:tilt}
\end{equation}

Tilting reweights each token's public probability by $e^{\lambda s_t(x)}$ and renormalises: tokens the private context supports gain mass, the others lose it, and the cloud's distribution is otherwise preserved. 
The weight $\lambda$ sets how far the client moves; $\lambda=0$ gives the public reference $\tilde p_t$, and the ledger (\S\ref{sec:ledger}) chooses the largest $\lambda$ the remaining budget allows.
The \emph{evidence} term ($a$) transfers only the shift the private context induces in the local model; at $a=\lambda=1$, $b=0$ it is the product-of-experts form of context-aware and proxy decoding~\cite{shi2024cad,liu2024proxy} and shares the unclipped log-linear form, not the procedure, of SpecSteer's recovery step~\cite{lv2026specsteersynergizinglocalcontext}. 
The \emph{local} term ($b$) is geometric (log-linear) pooling of the cloud and the full-context local model, as in logit-level collaboration~\cite{zhang2024cogenesis}; at $b=\lambda=1$, $a=0$, with the clip inactive, it reproduces the local model exactly.
We claim not these forms, but a principled way to choose among them and the accounting that bounds what any of them reveals.

\begin{proposition}[Tilting]\label{prop:tilt}
(a) For a score that is not constant on the support of $\tilde p_t$ and any
$b'\in[0,\KL(\mu^1\|\tilde p_t)]$, there is a unique $\lambda\in[0,1]$ with $\KL(\mu^\lambda\|\tilde p_t)=b'$, and $\mu^\lambda$ uniquely maximises $\E_\mu[s_t]$ over all laws with $\KL(\mu\|\tilde p_t)\le b'$. 
(b) Suppose the private context moves both models along a common likelihood ratio at different scales, $p^\star_t\propto\tilde p_t\,\ell^{\kappa}$ and $q^X_t\propto q^V_t\,\ell$, with the clip inactive. Then the evidence score with $a=\kappa$ gives $\mu^{1}=p^\star_t$. 
(c) With $a=0$, $b=1$ and the clip inactive, $\mu^1=q^X_t$ (local-only), and $\lambda=0$ gives
$\tilde p_t$ (public reference). 
(d) The linear mixture $\lambda q^X_t+(1-\lambda)\tilde p_t$ at $\lambda=1$ equals $q^X_t$ regardless of $\kappa$.
\end{proposition}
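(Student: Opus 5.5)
Part (a) is the standard exponential-family and Gibbs variational argument applied to $\tilde p_t$ and the score $s_t$. Write $Z(\lambda)=\sum_x\tilde p_t(x)e^{\lambda s_t(x)}$ and $A(\lambda)=\log Z(\lambda)$. The support of $\tilde p_t$ is finite: it is the full vocabulary, since the floor gives every token positive mass. The score is bounded by the clip. Hence $A$ is smooth, with $A'(\lambda)=\E_{\mu^\lambda}[s_t]$ and $A''(\lambda)=\mathrm{Var}_{\mu^\lambda}(s_t)$. The variance is strictly positive because $s_t$ is not constant on the support and $\mu^\lambda$ has full support.

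Direct computation gives $f(\lambda):=\KL(\mu^\lambda\|\tilde p_t)=\lambda A'(\lambda)-A(\lambda)$, so $f'(\lambda)=\lambda A''(\lambda)>0$ for $\lambda>0$. Together with $f(0)=0$, $f$ is continuous and strictly increasing on $[0,1]$. The intermediate value theorem then gives a unique $\lambda$ with $f(\lambda)=b'$ for every $b'\in[0,f(1)]$.

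For the optimality claim, take any $\mu$ with $\KL(\mu\|\tilde p_t)\le b'$ and use the identity
\[
\KL(\mu\|\mu^\lambda)=\KL(\mu\|\tilde p_t)-\lambda\E_\mu[s_t]+A(\lambda).
\]
Evaluating it at $\mu=\mu^\lambda$ gives $A(\lambda)=\lambda\E_{\mu^\lambda}[s_t]-b'$. Subtracting, we get
\[
\lambda\bigl(\E_{\mu^\lambda}[s_t]-\E_\mu[s_t]\bigr)=\KL(\mu\|\mu^\lambda)+b'-\KL(\mu\|\tilde p_t)\ge\KL(\mu\|\mu^\lambda)\ge0.
\]
For $\lambda>0$ this yields $\E_\mu[s_t]\le\E_{\mu^\lambda}[s_t]$, with equality only if $\mu=\mu^\lambda$, which gives uniqueness.

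The main obstacle is the degenerate case $b'=0$, i.e.\ $\lambda=0$. There the constraint forces $\mu=\tilde p_t$, so both existence and uniqueness hold trivially. I would treat this case separately.

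Part (b) is a substitution with the clip inactive and $b=0$ (the evidence score). The assumption gives $\log q^X_t-\log q^V_t=\log\ell-\log C$ for a normaliser $C$, so $s_t=\kappa\log\ell+\text{const}$. Then $\mu^1\propto\tilde p_t\,\ell^\kappa\propto p^\star_t$, and the two agree after normalisation because additive constants in the score cancel.

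Part (c) is similar. With $a=0$ and $b=1$, the score is $s_t=\log q^X_t-\log\tilde p_t$, so $\mu^1\propto\tilde p_t\cdot q^X_t/\tilde p_t=q^X_t$. This is well defined because $\tilde p_t>0$ everywhere. At $\lambda=0$ the tilt factor is $1$, so $\mu^0=\tilde p_t$.

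Part (d) is immediate from the definition of the linear mixture: at $\lambda=1$ it is $q^X_t$, which involves neither $\tilde p_t$ nor $\kappa$. It therefore cannot recover $p^\star_t$ when $\kappa\neq1$, which is the contrast with (b).
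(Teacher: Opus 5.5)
Your proof is correct and follows the paper's argument essentially step for step. The paper gets strict monotonicity of $\KL(\mu^\lambda\|\tilde p_t)=\lambda\psi'(\lambda)-\psi(\lambda)$ from its separate Lemma~\ref{lem:monotone}, which is the same cumulant-generating-function computation you inline. It then uses the same Gibbs identity $\KL(\mu\|\tilde p_t)-\lambda\E_\mu s_t=\KL(\mu\|\mu^\lambda)-\psi(\lambda)$ for optimality, treats $b'=0$ separately as you do, and proves (b)--(d) by the same direct substitutions.
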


Part (a) is the Gibbs variational principle~\cite{csiszar1975idivergence}; (b)--(d) are direct, and
Appendix~\ref{app:proofs} gives the proofs. 
By (c), the unclipped family contains the local-only law and the public reference, which differs from the restricted cloud only in its reconstructed tail. 
Part (b) explains why a small evidence weight can fail; small models update less on in-context evidence than large ones, so the right $a$ is not~1. Which term should dominate depends on the task. When the cloud contributes reasoning over private \emph{values}, the evidence term injects those values and follows the cloud elsewhere. 
When the local model's reading of a private \emph{document} is the bottleneck, geometric pooling helps. We therefore select $(a,b)$ once per model pair and task family on a \emph{validation} split of public data. No private data is needed, and we freeze the choice before any test run (\S\ref{sec:setup}).

\subsection{Leakage Ledger}
\label{sec:ledger}
Fix an order $\alpha\ge1$ ($\alpha=1$ is KL) and a budget $\beta$ in \emph{nats} per answer. Nats are also the unit of the $\varepsilon$ of (R\'enyi) differential privacy~\cite{mironov2017renyi}, though $\beta$ is not interchangeable with a DP parameter. 
For scale, a uniform six-digit code carries $\ln 10^6\approx13.8$ nats, and at $\beta=4$ Theorem~\ref{thm:ledger}(iii) caps the chance of guessing it at $0.34$. At each position the client chooses the largest $\lambda_t\le\bar\lambda$ with $D_\alpha(\mu^{\lambda_t}_t\|\tilde p_t)\le B_t$.
It uses a fixed number of bisection steps, since $D_\alpha(\mu^\lambda\|\tilde p)$ is non-decreasing in $\lambda$; Lemma~\ref{lem:monotone} in Appendix~\ref{app:proofs} establishes this for the tilt and the linear mixture. It then debits $B_{t+1}=B_t-D_\alpha(\mu_t\|\tilde p_t)$.
Here $\bar\lambda=1$ and the score is the validation-selected one. Let $P^\T_X$ be the transcript law and $P^\T_{\mathrm{ref}}$ the law of the same protocol with $\lambda\equiv0$, which depends only on $V$ and the cloud's strategy.

\begin{theorem}[Ledger]\label{thm:ledger}
Let the cloud's randomness $R$ be independent of $X$ given $V$. For every $X$ and every cloud strategy, $D_\alpha(P^\T_X\,\|\,P^\T_{\mathrm{ref}})\le\beta$. 
Consequently, for $\alpha=1$:
(i) $I(X;\T\mid V)\le\beta$ under any prior;
(ii) any two private contexts with the same public view satisfy $\TV(P^\T_X, P^\T_{X'})\le\sqrt{2\beta}$;
(iii) if a secret $S$ is uniform on $m$ values given $V$, any cloud guess $\hat S(\T)$ satisfies $\Pr[\hat S=S]\le\mathrm{kl}^{-1}(1/m,\beta)$, where $\mathrm{kl}^{-1}(a,\beta)=\max\{x:\mathrm{kl}(x\|a)\le\beta\}$.
\end{theorem}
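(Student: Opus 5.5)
The plan is to reduce the transcript divergence to the divergence of the answer-generation process, and then apply a Rényi-filter composition argument. By Theorem~\ref{thm:answer-only}, $\T$ is a deterministic function of $(V,R,Y)$, and the same map serves under the reference process, since the cloud's replies depend only on $V$, $R$ and earlier prefixes. The data-processing inequality for $D_\alpha$ therefore gives $D_\alpha(P^\T_X\|P^\T_{\mathrm{ref}})\le D_\alpha(P^{R,Y}_X\|P^{R,Y}_{\mathrm{ref}})$, conditional on $V$ throughout. Because $R$ is independent of $X$ given $V$, the $R$-marginals agree under both processes, so it suffices to bound the divergence of the sequence of cloud replies and emitted tokens.

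\textbf{Sequential structure.} Write the joint law as a product of conditionals in protocol order. The cloud's reply kernels (streamed $d_j$, top-$K$ lists, lookahead lengths) are functions of $V$, $R$ and the past prefix, and are identical under both processes, so they contribute nothing. Each emitted token $y_t$ has conditional law $\mu_t=\mu^{\lambda_t}_t$ under $X$ and $\tilde p_t$ under the reference. Here $\tilde p_t$ is the same function of the history in both cases, because it depends only on the cloud reply. This is where Proposition~\ref{prop:coupling} is avoided. By Lemma~\ref{lem:unique}, the token is drawn from $\mu_t$ with fresh private randomness and no coupling to the draft, so the per-step conditional divergence given the full history is exactly $D_\alpha(\mu_t\|\tilde p_t)$. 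The client's choice of $\lambda_t$ is made adaptively from $X$, but the ledger guarantees $\sum_t D_\alpha(\mu_t\|\tilde p_t)\le\beta$ on every realised path, with the budget never going negative. Lemma~\ref{lem:monotone} and bisection ensure that $\lambda=0$ is always feasible, so the debit stays within $B_t$.

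\textbf{Main step (the obstacle).} The per-step divergences are adaptively chosen, so simple composition does not apply directly. I would invoke the Rényi filter theorem of~\cite{feldman2021filter}: for adaptively chosen mechanisms whose realised Rényi divergences sum to at most $\beta$ almost surely, the divergence of the whole interaction is at most $\beta$. For $\alpha=1$ this is just the KL chain rule, $\E\sum_t \KL(\mu_t\|\tilde p_t)\le\beta$, and needs no filter. For $\alpha>1$, I would reproduce the supermartingale argument. Define $M_t=\exp\bigl((\alpha-1)(\mathcal L_t-\sum_{s\le t}D_\alpha(\mu_s\|\tilde p_s))\bigr)$, where $\mathcal L_t$ is the privacy-loss random variable. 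Taking expectations under the reference-tilted measure shows $\E[M_t]\le1$, and a stopping argument at the end of the answer gives $D_\alpha\le\beta$. The delicate points are checking the filter's predictability condition, since $\lambda_t$ may depend on $X$ but is fixed before $y_t$ is drawn, and handling the variable stopping time (answer length). Stopping is itself a function of $Y$ and the cloud's replies, so it is covered by the same filtration.

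\textbf{Consequences for $\alpha=1$.} For (i), $I(X;\T\mid V)=\E_X\KL(P^\T_X\|P^\T_V)\le\E_X\KL(P^\T_X\|P^\T_{\mathrm{ref}})\le\beta$, because the mixture $P^\T_V$ minimises expected KL among all reference laws. For (ii), apply Pinsker's inequality to each context against the common reference: $\TV(P^\T_X,P^\T_{\mathrm{ref}})\le\sqrt{\beta/2}$. The triangle inequality then gives $\TV(P^\T_X,P^\T_{X'})\le\sqrt{2\beta}$. For (iii), apply data processing to the indicator $\mathbf 1\{\hat S(\T)=S\}$. Averaging over $S$ by convexity, the success probability $x$ satisfies $\mathrm{kl}(x\|a)\le\beta$, where $a$ is the success probability under the reference. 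Since the reference is independent of $S$, $a=1/m$. Because $\mathrm{kl}(\cdot\|1/m)$ is increasing on $[1/m,1]$, this yields $x\le\mathrm{kl}^{-1}(1/m,\beta)$.
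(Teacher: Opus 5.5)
Your proposal is correct and follows essentially the paper's route: you reduce to the law of $(R,Y)$ via Theorem~\ref{thm:answer-only} and data processing, use the KL chain rule for $\alpha=1$ and a pathwise R\'enyi-filter martingale for $\alpha>1$, and derive (i)--(iii) from the fact that the mixture minimises average KL, from Pinsker plus the triangle inequality, and from data processing to a binary $\mathrm{kl}$ whose reference success rate is $1/m$. The only difference is cosmetic: your martingale $L_t^{\alpha-1}e^{-(\alpha-1)\sum_{s\le t}D_\alpha(\mu_s\|\tilde p_s)}$ is taken under the true law, whereas the paper uses $L_t^{\alpha}e^{-(\alpha-1)\sum_{s\le t}D_\alpha(\mu_s\|\tilde p_s)}$ under the reference law $Q$, and the two are equivalent by change of measure.
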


The ledger keeps $\sum_tD_\alpha(\mu_t\|\tilde p_t)\le\beta$ on every path, so the chain rule (KL) or a pathwise R\'enyi filter~\cite{feldman2021filter} ($\alpha>1$) bounds the answer's divergence, and Theorem~\ref{thm:answer-only} transfers the bound to $\T$; Appendix~\ref{app:proofs} gives the proof.

The protocol remains answer-only at $\beta=\infty$; a finite budget also bounds disclosure through the answer. A charge bounds the answer's dependence on $X$, not the bits a given answer conveys. By the chain rule, budgets add across answers, so $n$ answers about one context at budget $\beta$ satisfy Theorem~\ref{thm:ledger} with $n\beta$ and the ledger should be kept per document; approved public views are not charged.

\section{Experimental Design}
\label{sec:setup}
The evaluation is guided by the following three questions. \\
\hspace*{1em}\textbf{RQ1:} Does \method{} with an unlimited leakage budget ($\beta=\infty$) outperform both the restricted cloud and the local model alone, on private documents and on private values? \\
\hspace*{1em}\textbf{RQ2:} Does the tilt beat linear-mixture fusion at a matched budget? \\
\hspace*{1em}\textbf{RQ3:} How does a more informative public view change the gap to full upload?

\smallskip
\textbf{Models and Serving.} We use pinned Qwen2.5-Instruct checkpoints~\cite{qwen2024qwen25}: 7B as the cloud model and 0.5B or 1.5B as the local model, with $\tau=0.7$, top-$K=64$ replies and lookahead $L=8$. Unless stated otherwise, all results are averaged over three runs. Code is available at \url{https://anonymous.4open.science/r/adler-4167}.

\smallskip
\textbf{Tasks and Cohorts.} Two regimes. (i) \emph{Private documents}: multi-hop QA over private passages from 2WikiMQA~\cite{ho2020wikimqa} and MuSiQue~\cite{trivedi2022musique} in LongBench~\cite{bai2024longbench}, with official prompts, output caps and F1; each task is split by salted hash into 50 validation and 150 test questions. 
(ii) \emph{Private values}: GSM8K word problems~\cite{cobbe2021trainingverifierssolvemath} whose numbers are private. The public view replaces every number with a placeholder, so the cloud can reason but not compute. We use 50 validation problems from the training split and 300 test problems, scored by \emph{strict} exact match using GSM8K’s official parser, which extracts the number following \textit{\#\#\#\#}; because small models often answer in another format, we also report a \emph{lenient} re-score of the same outputs that also accepts other common answer formats, listed in Appendix~\ref{app:repro}. 
These test splits were not inspected during development. HotpotQA~\cite{yang2018hotpotqa} test questions were, so we report them separately.

\smallskip
\textbf{Public Views.} For documents, four mechanisms $M$: question only (\textit{none}); BM25 passage selection~\cite{robertson1994okapi,robertson2009bm25}, which sends the passages that rank highest against the question, up to 20\% of the evidence tokens, in document order; the last 20\% of evidence tokens (\textit{suffix}); and heuristic entity/number masking. GSM8K uses number masking. 
BM25 is deterministic and needs no trained model, yet remains a strong zero-shot retriever~\cite{thakur2021beir}; the guarantees of \S\ref{sec:method} hold for any view.

\smallskip
\textbf{Methods.} \method{} uses the selected score, the full reference and clip $c=50$, at $\beta\in\{4,16,\infty\}$. 
The baselines are the three endpoints, \emph{restricted} (cloud on $V$, sampling its renormalised top-$K$), \emph{local-only} (small model on $X$, no cloud) and \emph{full upload} (cloud on $X$, discloses $X$), and controlled adaptations of five published designs, run on our models and API to isolate each design choice rather than reproduce the full systems. 
\emph{Private-draft SD} is exact restricted speculative decoding ($\gamma=5$), in which a full-context small model drafts and the restricted cloud verifies, an architectural control for the private drafts that SpecSteer and AsymSpec~\cite{lv2026specsteersynergizinglocalcontext,liang2026asymspec} send, not a reproduction of their steering. The \emph{evidence-only} tilt ($a=1,b=0$, top-$K$ reference, $c=10$) is the product-of-experts form of context-aware decoding~\cite{shi2024cad}. The \emph{linear mixture} is DP-Fusion's fusion rule~\cite{thareja2026dpfusion}, run with the same protocol and ledger on the cloud's top-$K$ support at $\beta\in\{16,\infty\}$, and also, for the 0.5B pair, with its weight cap tuned on validation like the tilt's score. 
The zero-leakage \emph{post-edit} design follows HaS~\cite{chen2023has}. The cloud answers from $V$ alone, and the local model edits that draft into the final answer using $X$; on GSM8K, the client, which made the number mask, can first restore the private numbers in the draft, and we report both versions. The zero-leakage \emph{program} baseline follows program-aided reasoning~\cite{gao2023pal}. The cloud writes a Python function \textit{solve($N_1$,\,\dots,\,$N_k$)} from the masked problem, and the client runs it on the real numbers; no argument, result or error is returned to the cloud, and its zero leakage assumes the program runs without network access. A variant falls back to local-only when the program fails to run, without consulting the gold answer. Lookahead $L=1$ reproduces the message pattern of CoVeil's edge-side fusion~\cite{zhang2026coveil}, one token per round trip. 
The evidence-only tilt and the mixture differ from \method{} in more than one setting, so the one-factor ablation in \S\ref{sec:ablation} isolates each factor.

\smallskip
\textbf{Score Selection.} For each model pair and task family, $(a,b)$ is chosen from five declared candidates, $(1,0)$, $(2,0)$, $(0,\tfrac12)$, $(0,\tfrac34)$ and $(\tfrac12,\tfrac12)$, by the reported metric in a single run on the validation split, with the question-only view for QA; the exact local-only endpoint $(0,1)$ is not a candidate. It selects $(0,\tfrac34)$ for QA with both local models, and $(2,0)$ (0.5B) and $(1,0)$ (1.5B) for GSM8K. No setting is tuned on test outcomes; Appendix~\ref{app:selection} gives details.

\begin{figure*}[t]\centering
\includegraphics[width=0.9\textwidth]{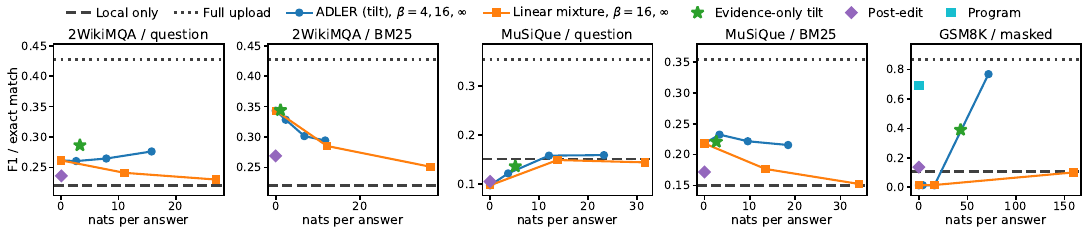}
\caption{Test F1 (GSM8K: strict exact match) against mean nats spent per answer (0.5B local, 7B cloud). Both curves start at the restricted cloud; the zero-leakage post-edit and (GSM8K only) program baselines sit at 0 nats. Dashed: local-only; dotted: full upload.}
\label{fig:frontier}
\end{figure*}
\begin{table*}[tbp]\footnotesize\centering
\caption{Multi-hop QA test results (0.5B local, 7B cloud): F1 and mean nats spent per answer. Local-only and full upload ignore $V$ and repeat across views; private-draft SD uses the question-only view. Bottom: paired differences, $^{*}$ if the 95\% interval excludes zero. Bold: \method{} best among methods keeping $X$ local.}
\label{tab:main}
\begin{tabular}{lrrrrrrrr}
\toprule
 & \multicolumn{2}{c}{2WikiMQA / question} & \multicolumn{2}{c}{2WikiMQA / BM25} & \multicolumn{2}{c}{MuSiQue / question} & \multicolumn{2}{c}{MuSiQue / BM25} \\
 & Score & Nats & Score & Nats & Score & Nats & Score & Nats \\
\midrule
Restricted cloud & 0.262 & -- & 0.343 & -- & 0.097 & -- & 0.218 & -- \\
Local only & 0.221 & -- & 0.221 & -- & 0.150 & -- & 0.150 & -- \\
Full upload & 0.428 & -- & 0.428 & -- & 0.354 & -- & 0.354 & -- \\
Private-draft SD & 0.266 & -- & -- & -- & 0.104 & -- & -- & -- \\
\method{} $\beta{=}4$ & 0.261 & 2.6 & 0.328 & 2.4 & 0.121 & 3.7 & \textbf{0.232} & 3.3 \\
\method{} $\beta{=}16$ & 0.265 & 7.9 & 0.302 & 6.8 & 0.157 & 12.0 & 0.222 & 9.6 \\
\method{} $\beta{=}\infty$ & 0.276 & 15.8 & 0.294 & 11.7 & \textbf{0.158} & 23.2 & 0.216 & 18.5 \\
Evidence-only tilt & 0.286 & 3.2 & 0.344 & 1.1 & 0.135 & 5.1 & 0.221 & 2.6 \\
Mixture $\beta{=}16$ & 0.241 & 11.1 & 0.285 & 12.2 & 0.148 & 13.8 & 0.177 & 13.5 \\
Mixture $\beta{=}\infty$ & 0.230 & 27.0 & 0.251 & 36.6 & 0.144 & 31.6 & 0.152 & 34.1 \\
Mixture, tuned cap & 0.246 & 17.9 & 0.272 & 24.5 & 0.132 & 22.1 & 0.170 & 25.7 \\
Post-edit & 0.236 & -- & 0.269 & -- & 0.105 & -- & 0.172 & -- \\
\midrule
$\Delta$ vs.\ restricted cloud & \multicolumn{2}{c}{$+0.015$} & \multicolumn{2}{c}{$-0.049^{*}$} & \multicolumn{2}{c}{$+0.062^{*}$} & \multicolumn{2}{c}{$-0.003$} \\
$\Delta$ vs.\ local only & \multicolumn{2}{c}{$+0.056^{*}$} & \multicolumn{2}{c}{$+0.074^{*}$} & \multicolumn{2}{c}{$+0.008$} & \multicolumn{2}{c}{$+0.065^{*}$} \\
$\Delta$ vs.\ mixture $\beta{=}\infty$ & \multicolumn{2}{c}{$+0.046^{*}$} & \multicolumn{2}{c}{$+0.043$} & \multicolumn{2}{c}{$+0.015$} & \multicolumn{2}{c}{$+0.063^{*}$} \\
$\Delta$ vs.\ mixture, tuned cap & \multicolumn{2}{c}{$+0.030$} & \multicolumn{2}{c}{$+0.022$} & \multicolumn{2}{c}{$+0.027^{*}$} & \multicolumn{2}{c}{$+0.046^{*}$} \\
$\Delta$ \method{} $\beta{=}16$ vs.\ mixture $\beta{=}16$ & \multicolumn{2}{c}{$+0.024$} & \multicolumn{2}{c}{$+0.016$} & \multicolumn{2}{c}{$+0.009$} & \multicolumn{2}{c}{$+0.045^{*}$} \\
$\Delta$ vs.\ post-edit & \multicolumn{2}{c}{$+0.040$} & \multicolumn{2}{c}{$+0.025$} & \multicolumn{2}{c}{$+0.054^{*}$} & \multicolumn{2}{c}{$+0.044^{*}$} \\
\bottomrule
\end{tabular}
\end{table*}

\textbf{Planted-Secret Audit.} We insert a uniform 6-digit code into 96 documents built from
2WikiMQA \emph{validation} passages; half the questions ask for the code, half are the original ones.

\smallskip
\textbf{Statistics.} For each question, we average the scores across the three runs, then compute question-clustered paired bootstrap 95\% intervals (10{,}000 resamples) with no multiplicity adjustment. We call a difference significant when its interval excludes zero and mark it with $^{*}$ in the tables.

\section{Results}
\label{sec:results}
\subsection{Answers to RQs}
\label{sec:rqs}

\textbf{RQ1 on Private Documents.} 
Table~\ref{tab:main} and Figure~\ref{fig:frontier} show that on multi-hop QA, a 0.5B local model adds too little evidence for \method{} to beat both endpoints. The tilt beats local-only on three of four panels, for example, by 0.056 F1 on 2WikiMQA with the question-only view, and beats restricted on MuSiQue with that view by 0.062. It is significantly better than both on no panel, and with BM25 on 2WikiMQA it falls 0.049 below restricted. All three differences are significant. The QA score was selected on the question-only view; under BM25, $V$ already carries the evidence, and the 0.5B reading of $X$ dilutes it. With the 1.5B local model, the tilt beats restricted but not local-only, and on HotpotQA, inspected during development, the 0.5B tilt beats both; more details can be found in Appendix~\ref{app:extra}.

\textbf{RQ1 on Private Values.} Table~\ref{tab:gsm8kmain} shows that on GSM8K, \method{} brings both local models close to full upload. Number masking leaves the cloud with a problem it can reason about but rarely solve (restricted: 0.012), even though GSM8K is public. The local models see the numbers but reason poorly or ignore the answer format. Local-only thus scores far lower under the strict parser than under the lenient re-score, so the lenient comparison is conservative. 

\begin{table}[!tb]\footnotesize\centering
\caption{GSM8K with private numbers (300 test problems): strict (primary) and lenient exact match and mean nats spent per answer, for both local models. Bold: \method{} best among methods keeping $X$ local.}
\label{tab:gsm8kmain}
\setlength{\tabcolsep}{4pt}
\begin{tabular}{lrrrrrr}
\toprule
 & \multicolumn{3}{c}{0.5B local} & \multicolumn{3}{c}{1.5B local} \\
Method & Strict & Lenient & Nats & Strict & Lenient & Nats \\
\midrule
Restricted cloud & 0.012 & 0.024 & -- & 0.012 & 0.024 & -- \\
Local only & 0.106 & 0.410 & -- & 0.083 & 0.670 & -- \\
Post-edit & 0.136 & 0.357 & -- & 0.210 & 0.578 & -- \\
Program & 0.691 & 0.691 & -- & 0.691 & 0.691 & -- \\
\method{} $\beta{=}16$ & 0.012 & 0.024 & 16 & 0.024 & 0.036 & 16 \\
\method{} $\beta{=}\infty$ & \textbf{0.769} & \textbf{0.834} & 72 & \textbf{0.789} & \textbf{0.852} & 52 \\
Full upload & 0.870 & 0.918 & -- & 0.870 & 0.918 & -- \\
\bottomrule
\end{tabular}
\end{table}

With the 0.5B model, the selected tilt ($a=2$, $b=0$) reaches 0.769 strict and 0.834 lenient accuracy, 0.42 above local-only under the lenient score (95\% CI 0.38 to 0.47). With the 1.5B model ($a=1$), it reaches 0.789 and 0.852, 0.18 above local-only. Both end within 0.07 to 0.10 of full upload (0.870 / 0.918). The gain from amplifying the small model's evidence ($a=2$) is consistent with insufficient evidence weighting, as Proposition~\ref{prop:tilt}(b) predicts and the ablation in \S\ref{sec:ablation} confirms. Paired differences and significance for the 0.5B client are reported in Table~\ref{tab:gsm8k} in the Appendix. The answer to RQ1 is thus yes for private values with both local models, and for private documents only on HotpotQA.

\textbf{RQ2: Tilt versus Mixture.} Table~\ref{tab:main} shows that without a budget, the tilt improves on the linear mixture in most settings, and no reported interval establishes an advantage for the mixture. Unbudgeted, with the 0.5B model, the tilt exceeds the mixture on every QA panel by 0.015 to 0.063 F1, significantly on 2WikiMQA/question and MuSiQue/BM25, and on GSM8K by 0.67, where the mixture at full weight only matches the local model (0.100 strict and 0.411 lenient), as Proposition~\ref{prop:tilt}(d) predicts. At $\beta=16$, the QA estimates favour the tilt by 0.009 to 0.045 while it spends fewer nats (6.8 to 12.0 against 11.1 to 13.8), significantly only on MuSiQue/BM25. Tuning the mixture's weight cap on validation for the 0.5B pair ($\tfrac12$ for GSM8K, $\tfrac34$ for QA) does not close the gap. The tuned mixture scores 0.042 on GSM8K, 0.73 below the tilt (95\% CI 0.69 to 0.76), and trails it by 0.022 to 0.046 F1 on QA, with significant gaps on both MuSiQue panels. As the ablation in \S\ref{sec:ablation} shows, giving the mixture the tilt's full reference does not help it on QA either (0.18 against 0.229 for the tilt), so the gap reflects the fusion rule. With 1.5B, the tilt significantly exceeds the mixture on both MuSiQue panels at both budgets; more details are in Appendix~\ref{app:extra}.

\textbf{RQ3: Effect of the Public View.} Table~\ref{tab:main} shows that a more informative view narrows the gap to full upload, mostly by lifting the restricted cloud. From question-only to BM25, the gap narrows from 0.152 to 0.134 F1 on 2WikiMQA and from 0.196 to 0.139 on MuSiQue, while the restricted cloud improves more, by 0.081 and 0.121, so the tilt's advantage over it shrinks. Under the suffix and entity-masked views, the tilt stays within $\pm0.01$ of the restricted cloud, while linear mixtures lose up to 0.09 F1. The tilt helps most when $V$ carries the least evidence; more details for all four views are in Appendix~\ref{app:extra}.

\subsection{Leakage Audit and Budget}
\label{sec:audit}
\begin{table}[t]\small\centering
\caption{Planted-secret audit (96 documents, 48 per question type). \emph{Answer}: rate of the 6-digit code in the answer (local-only sends nothing to the cloud). \emph{Beyond}: rate at which the full code reaches the cloud other than in the answer (uploaded context or drafts), an exposure indicator, not an information measure. F1: utility on the other questions. Bound: Theorem~\ref{thm:ledger}(iii) on guessing the code ($10^{-6}$: blind guess).}
\label{tab:canary}
\begin{tabular}{lrrrrr}
\toprule
 & \multicolumn{2}{c}{Asks for code} & \multicolumn{2}{c}{Other} & \\
Method & Answer & Beyond & Beyond & F1 & Bound \\
\midrule
Restricted cloud & 0.00 & 0.00 & 0.00 & 0.25 & $10^{-6}$ \\
Local only & 0.98 & 0.00 & 0.00 & 0.30 & $10^{-6}$ \\
Full upload & 1.00 & 1.00 & 1.00 & 0.51 & -- \\
Private-draft SD & 0.00 & 0.10 & 0.00 & 0.27 & -- \\
\method{} $\beta{=}4$ & 0.00 & 0.00 & 0.00 & 0.25 & 0.34 \\
\method{} $\beta{=}16$ & 0.00 & 0.00 & 0.00 & 0.29 & 1.00 \\
\method{} $\beta{=}\infty$ & 0.96 & 0.00 & 0.00 & 0.30 & 1.00 \\
Evidence-only tilt & 0.46 & 0.00 & 0.00 & 0.26 & 1.00 \\
Mixture $\beta{=}16$ & 0.00 & 0.00 & 0.00 & 0.31 & 1.00 \\
Mixture $\beta{=}\infty$ & 0.96 & 0.00 & 0.00 & 0.32 & 1.00 \\
\bottomrule
\end{tabular}
\end{table}

\begin{table*}[t]\footnotesize\centering
\caption{One-factor ablations (0.5B local, 7B cloud, a single run). Each row changes one setting of the frozen configuration (GSM8K: $a{=}2,b{=}0$; QA: $a{=}0,b{=}\frac34$; full reference, $c{=}50$, $K{=}64$, $L{=}8$, $\beta{=}\infty$). Cells: level (paired difference from the control, * if its 95\% interval excludes zero); QA pools the four panels of Table~\ref{tab:main}.}
\label{tab:ablation}
\begin{tabular}{llrrrrrrr}
\toprule
 & & \multicolumn{5}{c}{GSM8K / masked numbers} & \multicolumn{2}{c}{Multi-hop QA} \\
\cmidrule(lr){3-7}\cmidrule(lr){8-9}
Factor & Setting & Strict & Lenient & Nats & Req. & Down kB & F1 & Nats \\
\midrule
Control & selected & 0.75 & 0.81 & 71 & 52 & 216 & 0.229 & 17.1 \\
Score & $a=0.5$ & 0.07 (-0.69*) & 0.08 (-0.73*) & 13 & 61 & 253 & -- & -- \\
Score & $a=1$ & 0.66 (-0.09*) & 0.73 (-0.07*) & 57 & 54 & 224 & 0.25 (+0.02) & 5.2 \\
Score & $a=3$ & 0.67 (-0.08*) & 0.78 (-0.02) & 108 & 56 & 231 & -- & -- \\
Score & $a=4$ & 0.62 (-0.13*) & 0.72 (-0.09*) & 153 & 60 & 249 & -- & -- \\
Reference & top-$K$ only & 0.75 (-0.00) & 0.82 (+0.02) & 70 & 52 & 214 & 0.23 (-0.00) & 14.0 \\
Clip & $c=10$ & 0.62 (-0.13*) & 0.67 (-0.14*) & 81 & 57 & 237 & -- & -- \\
Reply size & $K=8$ & 0.72 (-0.03) & 0.76 (-0.05*) & 87 & 56 & 30 & -- & -- \\
Reply size & $K=256$ & 0.76 (+0.01) & 0.82 (+0.01) & 71 & 53 & 862 & -- & -- \\
Lookahead & $L=1$ & 0.75 (+0.00) & 0.81 (+0.00) & 71 & 292 & 151 & -- & -- \\
Lookahead & $L=16$ & 0.75 (+0.00) & 0.81 (+0.00) & 71 & 37 & 304 & -- & -- \\
Budget & $\beta=32$ & 0.06 (-0.69*) & 0.07 (-0.74*) & 32 & 57 & 235 & -- & -- \\
Budget & $\beta=64$ & 0.49 (-0.26*) & 0.53 (-0.28*) & 59 & 53 & 219 & -- & -- \\
Fusion rule & mixture, full ref. & 0.09 (-0.66*) & 0.39 (-0.42*) & 168 & 72 & 296 & 0.18 (-0.05*) & 42.4 \\
\bottomrule
\end{tabular}
\end{table*}

We next check what reaches the cloud and what the ledger costs.

\textbf{Planted-Secret Audit.} Table~\ref{tab:canary} checks Theorem~\ref{thm:answer-only} against the implementation. No answer-only configuration, at any budget and under either fusion rule, places the code in the cloud-received transcript beyond the answer (0/48 on both question types; one-sided 95\% upper bound 6.1\%). 
Re-deriving the client's messages from $V$, the answer and the cloud's recorded replies alone, as in the proof, reproduces all 57{,}372 answer-only transcripts. Private-draft SD never emits the code, but exposes it to the cloud in 5/48 extraction sessions through its drafts, the leakage described by Proposition~\ref{prop:drafts}. Full upload exposes it in every document. 
The ledger binds. Local-only emits the code for 47/48 extraction questions and unbudgeted fusion for 46/48, but the tilt at $\beta\in\{4,16\}$ and the mixture at $\beta=16$ never emit it. 
Bound (iii) is $0.34$ at $\beta=4$ and vacuous at $\beta=16>\log 10^6$, so the zero there is an empirical observation, not a guarantee. On the other questions, the tilt scores 0.29 F1 at $\beta=16$, against 0.25 for restricted and 0.30 for local-only; with 48 questions, neither difference is significant.

\textbf{Ledger Cost.} Table~\ref{tab:gsm8kmain} shows that the ledger makes the cost of the GSM8K gain explicit. At $\beta=16$, the tilt stays near the restricted cloud with both local models. Numeral tokens make up 16\% of a tilted answer but carry 30\% (0.5B) and 43\% (1.5B) of the spend. 

\begin{figure}[t]\centering
\includegraphics[width=0.93\columnwidth]{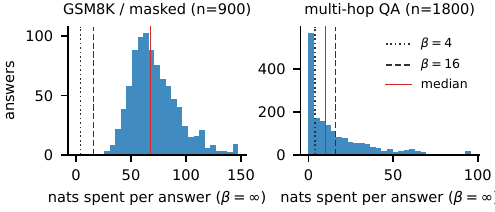}
\caption{Ledger spend per answer for \method{} at $\beta=\infty$ (0.5B local).}
\vspace{-2em}
\label{fig:spend}
\Description{Histograms of nats spent per answer, concentrated around 70 on GSM8K and heavy-tailed near
zero on QA.}
\end{figure}

\textbf{Where the Nats Go.} Figure~\ref{fig:spend} shows that on QA, the spend has a heavy tail. The median answer under the tilt spends 10 nats, and two-thirds spend at most 16, so a budget of 16 nats changes F1 little. Every GSM8K answer exceeds 16 nats (10th percentile 48), so such budgets remove the benefit.

\textbf{Budget Choosing.} A budget should follow from the inference risk the user accepts. For a secret uniform on $m$ values and a tolerated guessing probability $r$, Theorem~\ref{thm:ledger}(iii) gives $\beta\le\mathrm{kl}(r\|1/m)$; for a six-digit code this is 4 nats at $r=0.34$ but 0.006 nats at $r=0.001$, so strong protection leaves the answer little room to use the secret. In our step-by-step GSM8K outputs the charge averages 52 to 72 nats; it bounds disclosure relative to the public reference rather than measuring what an individual answer reveals. Our short QA outputs incurred smaller charges. Table~\ref{tab:main} shows that on MuSiQue with the question-only view, the tilt at $\beta=16$ still beats restricted significantly, by 0.061 with the 0.5B model and by 0.078 with 1.5B, retaining much of the unbudgeted improvement, so a cap of 16 nats keeps the QA benefit, though it is too loose to protect a short secret.

\subsection{Comparison with Zero-Leakage Designs}
\method{} buys its accuracy with ledger spend, so we also compare it with designs that disclose nothing beyond $V$. Table~\ref{tab:gsm8kmain} shows that on GSM8K, a cloud-written program, run by the client on the real numbers, is a strong alternative. It scores 0.691 under both strict and lenient scoring, and a local fallback for failed programs changes almost nothing (0.692). \method{} exceeds it by 0.078 with the 0.5B client (95\% CI 0.031 to 0.127) and by 0.098 with 1.5B (0.051 to 0.147), and by 0.14 and 0.16 under the lenient score. When run offline, the program discloses nothing beyond $V$, whereas \method{} spends 72 and 52 nats for this margin, so a user who cannot accept that spend should prefer the program for arithmetic of this kind. Post-editing is much weaker. Table~\ref{tab:gsm8kmain} shows it below local-only on GSM8K (0.357 against 0.410 lenient with 0.5B), and Table~\ref{tab:main} shows it near the endpoints on QA.

\subsection{Ablation Study}
\label{sec:ablation}
\begin{figure}[t]\centering
\includegraphics[width=0.85\columnwidth]{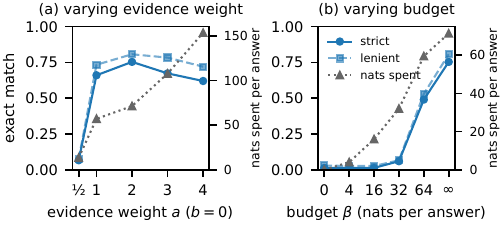}
\caption{GSM8K ablations (0.5B local, a single run): exact match and, dotted on the right axis, mean nats spent per answer. (a) Varying the evidence weight $a$ ($a=2$ is the frozen choice). (b) Varying the per-answer budget $\beta$ ($\beta=0$ is the restricted cloud).}
\label{fig:dose}
\Description{Left: accuracy peaks at evidence weight 2 while spend keeps rising. Right: accuracy is
near zero up to 32 nats, about 0.5 at 64 nats and 0.75 without a budget.}
\end{figure}

\textbf{Evidence Weight.} Table~\ref{tab:ablation} changes one setting at a time against the main run. Figure~\ref{fig:dose}a varies the evidence weight $a$ and shows that accuracy is an inverted U in $a$. At $a=0.5$, the evidence is too weak to move the cloud off its placeholders (0.07). Frozen $a=2$ is best (0.75); $a=1$ loses a significant 0.09, and $a=3$ and $4$ spend more nats (108 and 153 against 71) but score lower.

\textbf{Budget.} Figure~\ref{fig:dose}b varies the per-answer budget $\beta$ and shows that accuracy rises steeply only at large budgets. On GSM8K, 16 nats buy nothing (0.013), 32 nats little (0.060), and 64 nats about two-thirds of the gain, 0.490 against 0.753 unbudgeted.

\textbf{Reference and Clip.} Table~\ref{tab:ablation} shows that restricting the reference to the cloud's top-64 tokens changes accuracy by under 0.01 on GSM8K and QA; a tighter clip ($c=10$) costs 0.13, likely the cause of the evidence-only deficit.

\textbf{Fusion Rule.} Table~\ref{tab:ablation} shows that with the tilt's full reference, the mixture still reaches only 0.09 on GSM8K at 168 nats, against 0.75 at 71 nats for the tilt.

\textbf{Reply Size and Lookahead.} Table~\ref{tab:ablation} shows that with $K=8$, replies shrink sevenfold (30 against 216\,kB) at a cost of 0.03 strict (not significant) and 0.05 lenient accuracy; $K=256$ adds nothing. 
Lookahead changes only the cost. With $L=1$ and $L=16$, all 300 answers are token-identical to the control, while requests fall from 292 to 37. 
A trace-based cost model at a 100\,ms round trip and 10\,Mbit/s puts communication per GSM8K answer at 29.6\,s for $L=1$, the edge-side pattern, 5.4\,s for $L=8$ and 4.0\,s for $L=16$, excluding computation; more details can be found in Appendix~\ref{app:extra}.

\section{Related Work}
\label{sec:related}
\textbf{Speculative and Collaborative Decoding.} Speculative sampling is exact for its target for any proposal~\cite{leviathan2023fast,chen2023acceleratinglargelanguagemodel}. Distributed, cascaded and lossy variants optimise placement, deferral or speed~\cite{yu2025dsddistributedspeculativedecoding,ning2025dssd,bhattacharjee2025conformal,narasimhan2025cascades,holsman2025fuzzy,yin2024theoretical}. SpecSteer~\cite{lv2026specsteersynergizinglocalcontext} and AsymSpec~\cite{liang2026asymspec} let a full-context small model steer a restricted large model; neither analyses what the transcript discloses, which Propositions~\ref{prop:drafts}--\ref{prop:coupling} address. Acceptance patterns also form a side channel~\cite{wei2026speculationspillssecretschannels}.

\textbf{Private LLM Inference.} Prompt sanitisers~\cite{feyisetan2020privacy,tong2025inferdpt} and anonymise-then-restore pipelines such as HaS~\cite{chen2023has}, which our post-edit design adapts, restore a masked answer on the client. CoGenesis~\cite{zhang2024cogenesis} uploads local signals. CoVeil~\cite{zhang2026coveil} reduces exposure empirically without a formal bound; its edge-side message pattern matches restart streaming at $L=1$, as Table~\ref{tab:compare} details.
DP-Fusion~\cite{thareja2026dpfusion} and PMixED~\cite{flemings2024pmixed} bound the divergence of a \emph{single} party's mixed output; \method{} extends this to a two-party Web protocol. 
Cryptographic and DP-in-forward methods are complementary~\cite{li2025privategpt,kei2025shaft,du2023dpforward}.

\section{Conclusion}
\label{sec:limits}
Decoding messages can reveal more than the answer. \method{} makes every message to a Web API a function of the public view and the answer and bounds what the answer itself reveals. Unbudgeted, it brings a small client close to full upload on private values and ahead of a zero-leakage program; strong budgets forfeit most of it.

\textbf{Limitations.} \method{} protects message \emph{content}. Timing can depend on local compute over $X$, which we do not address. $V$ is disclosed by design, and our bounds take $M$ as given. The KL guarantee is average-case and weaker than pure DP; R\'enyi orders $\alpha>1$ give tail bounds at a utility cost. The program baseline requires computation over masked values, which is absent in our QA tasks. Our public views are rule-based and network times are modelled. 

\textbf{Future Work.} This includes zero-leakage baselines for document tasks, learned public views and a live Web API deployment. No personal data was used.

\bibliographystyle{ACM-Reference-Format}
\bibliography{references}


\begin{thebibliography}{37}


\ifx \showCODEN    \undefined \def \showCODEN     #1{\unskip}     \fi
\ifx \showISBNx    \undefined \def \showISBNx     #1{\unskip}     \fi
\ifx \showISBNxiii \undefined \def \showISBNxiii  #1{\unskip}     \fi
\ifx \showISSN     \undefined \def \showISSN      #1{\unskip}     \fi
\ifx \showLCCN     \undefined \def \showLCCN      #1{\unskip}     \fi
\ifx \shownote     \undefined \def \shownote      #1{#1}          \fi
\ifx \showarticletitle \undefined \def \showarticletitle #1{#1}   \fi
\ifx \showURL      \undefined \def \showURL       {\relax}        \fi
\providecommand\bibfield[2]{#2}
\providecommand\bibinfo[2]{#2}
\providecommand\natexlab[1]{#1}
\providecommand\showeprint[2][]{arXiv:#2}

\bibitem[Bai et~al\mbox{.}(2024)]%
        {bai2024longbench}
\bibfield{author}{\bibinfo{person}{Yushi Bai}, \bibinfo{person}{Xin Lv},
  \bibinfo{person}{Jiajie Zhang}, \bibinfo{person}{Hongchang Lyu},
  \bibinfo{person}{Jiankai Tang}, \bibinfo{person}{Zhidian Huang},
  \bibinfo{person}{Zhengxiao Du}, \bibinfo{person}{Xiao Liu},
  \bibinfo{person}{Aohan Zeng}, \bibinfo{person}{Lei Hou},
  \bibinfo{person}{Yuxiao Dong}, \bibinfo{person}{Jie Tang}, {and}
  \bibinfo{person}{Juanzi Li}.} \bibinfo{year}{2024}\natexlab{}.
\newblock \showarticletitle{{LongBench}: A Bilingual, Multitask Benchmark for
  Long Context Understanding}. In \bibinfo{booktitle}{\emph{Proceedings of the
  62nd annual meeting of the association for computational linguistics (volume
  1: Long papers)}}. \bibinfo{pages}{3119--3137}.
\newblock
\href{https://doi.org/10.18653/v1/2024.acl-long.172}{doi:\nolinkurl{10.18653/v1/2024.acl-long.172}}


\bibitem[Bhattacharjee et~al\mbox{.}(2025)]%
        {bhattacharjee2025conformal}
\bibfield{author}{\bibinfo{person}{Payel Bhattacharjee},
  \bibinfo{person}{Fengwei Tian}, \bibinfo{person}{Meiyu Zhong},
  \bibinfo{person}{Guangyi Zhang}, \bibinfo{person}{Osvaldo Simeone}, {and}
  \bibinfo{person}{Ravi Tandon}.} \bibinfo{year}{2025}\natexlab{}.
\newblock \bibinfo{title}{Conformal Sparsification for Bandwidth-Efficient
  Edge-Cloud Speculative Decoding}.
\newblock
\showeprint[arxiv]{2510.09942}~[cs.LG]
\urldef\tempurl%
\url{https://arxiv.org/abs/2510.09942}
\showURL{%
\tempurl}


\bibitem[Chen et~al\mbox{.}(2023a)]%
        {chen2023acceleratinglargelanguagemodel}
\bibfield{author}{\bibinfo{person}{Charlie Chen}, \bibinfo{person}{Sebastian
  Borgeaud}, \bibinfo{person}{Geoffrey Irving}, \bibinfo{person}{Jean-Baptiste
  Lespiau}, \bibinfo{person}{Laurent Sifre}, {and} \bibinfo{person}{John
  Jumper}.} \bibinfo{year}{2023}\natexlab{a}.
\newblock \bibinfo{title}{Accelerating Large Language Model Decoding with
  Speculative Sampling}.
\newblock
\showeprint[arxiv]{2302.01318}~[cs.CL]
\urldef\tempurl%
\url{https://arxiv.org/abs/2302.01318}
\showURL{%
\tempurl}


\bibitem[Chen et~al\mbox{.}(2023b)]%
        {chen2023has}
\bibfield{author}{\bibinfo{person}{Yu Chen}, \bibinfo{person}{Tingxin Li},
  \bibinfo{person}{Huiming Liu}, {and} \bibinfo{person}{Yang Yu}.}
  \bibinfo{year}{2023}\natexlab{b}.
\newblock \bibinfo{title}{Hide and Seek ({HaS}): A Lightweight Framework for
  Prompt Privacy Protection}.
\newblock
\showeprint[arxiv]{2309.03057}~[cs.CR]
\urldef\tempurl%
\url{https://arxiv.org/abs/2309.03057}
\showURL{%
\tempurl}


\bibitem[Cobbe et~al\mbox{.}(2021)]%
        {cobbe2021trainingverifierssolvemath}
\bibfield{author}{\bibinfo{person}{Karl Cobbe}, \bibinfo{person}{Vineet
  Kosaraju}, \bibinfo{person}{Mohammad Bavarian}, \bibinfo{person}{Mark Chen},
  \bibinfo{person}{Heewoo Jun}, \bibinfo{person}{Lukasz Kaiser},
  \bibinfo{person}{Matthias Plappert}, \bibinfo{person}{Jerry Tworek},
  \bibinfo{person}{Jacob Hilton}, \bibinfo{person}{Reiichiro Nakano},
  \bibinfo{person}{Christopher Hesse}, {and} \bibinfo{person}{John Schulman}.}
  \bibinfo{year}{2021}\natexlab{}.
\newblock \bibinfo{title}{Training Verifiers to Solve Math Word Problems}.
\newblock
\showeprint[arxiv]{2110.14168}~[cs.LG]
\urldef\tempurl%
\url{https://arxiv.org/abs/2110.14168}
\showURL{%
\tempurl}


\bibitem[Cover and Thomas(2006)]%
        {cover2006elements}
\bibfield{author}{\bibinfo{person}{Thomas~M. Cover} {and}
  \bibinfo{person}{Joy~A. Thomas}.} \bibinfo{year}{2006}\natexlab{}.
\newblock \bibinfo{booktitle}{\emph{Elements of Information Theory}
  (\bibinfo{edition}{2nd} ed.)}.
\newblock \bibinfo{publisher}{Wiley}.
\newblock


\bibitem[Csisz{\'a}r(1975)]%
        {csiszar1975idivergence}
\bibfield{author}{\bibinfo{person}{Imre Csisz{\'a}r}.}
  \bibinfo{year}{1975}\natexlab{}.
\newblock \showarticletitle{{I}-Divergence Geometry of Probability
  Distributions and Minimization Problems}.
\newblock \bibinfo{journal}{\emph{The Annals of Probability}}
  \bibinfo{volume}{3}, \bibinfo{number}{1} (\bibinfo{year}{1975}),
  \bibinfo{pages}{146--158}.
\newblock
\href{https://doi.org/10.1214/aop/1176996454}{doi:\nolinkurl{10.1214/aop/1176996454}}


\bibitem[Du et~al\mbox{.}(2023)]%
        {du2023dpforward}
\bibfield{author}{\bibinfo{person}{Minxin Du}, \bibinfo{person}{Xiang Yue},
  \bibinfo{person}{Sherman S.~M. Chow}, \bibinfo{person}{Tianhao Wang},
  \bibinfo{person}{Chenyu Huang}, {and} \bibinfo{person}{Huan Sun}.}
  \bibinfo{year}{2023}\natexlab{}.
\newblock \showarticletitle{{DP-Forward}: Fine-tuning and Inference on Language
  Models with Differential Privacy in Forward Pass}. In
  \bibinfo{booktitle}{\emph{Proceedings of the 2023 ACM SIGSAC Conference on
  Computer and Communications Security}}. \bibinfo{publisher}{Association for
  Computing Machinery}, \bibinfo{pages}{2665--2679}.
\newblock
\href{https://doi.org/10.1145/3576915.3616592}{doi:\nolinkurl{10.1145/3576915.3616592}}


\bibitem[Feldman and Zrnic(2021)]%
        {feldman2021filter}
\bibfield{author}{\bibinfo{person}{Vitaly Feldman} {and}
  \bibinfo{person}{Tijana Zrnic}.} \bibinfo{year}{2021}\natexlab{}.
\newblock \showarticletitle{Individual Privacy Accounting via a R{\'e}nyi
  Filter}. In \bibinfo{booktitle}{\emph{Advances in Neural Information
  Processing Systems}}, Vol.~\bibinfo{volume}{34}.
  \bibinfo{pages}{28080--28091}.
\newblock


\bibitem[Feyisetan et~al\mbox{.}(2020)]%
        {feyisetan2020privacy}
\bibfield{author}{\bibinfo{person}{Oluwaseyi Feyisetan}, \bibinfo{person}{Borja
  Balle}, \bibinfo{person}{Thomas Drake}, {and} \bibinfo{person}{Tom Diethe}.}
  \bibinfo{year}{2020}\natexlab{}.
\newblock \showarticletitle{Privacy- and Utility-Preserving Textual Analysis
  via Calibrated Multivariate Perturbations}. In
  \bibinfo{booktitle}{\emph{Proceedings of the 13th international conference on
  web search and data mining}}. \bibinfo{pages}{178--186}.
\newblock
\href{https://doi.org/10.1145/3336191.3371856}{doi:\nolinkurl{10.1145/3336191.3371856}}


\bibitem[Flemings et~al\mbox{.}(2024)]%
        {flemings2024pmixed}
\bibfield{author}{\bibinfo{person}{James Flemings}, \bibinfo{person}{Meisam
  Razaviyayn}, {and} \bibinfo{person}{Murali Annavaram}.}
  \bibinfo{year}{2024}\natexlab{}.
\newblock \showarticletitle{Differentially private next-token prediction of
  large language models}. In \bibinfo{booktitle}{\emph{Proceedings of the 2024
  Conference of the North American Chapter of the Association for Computational
  Linguistics: Human Language Technologies (Volume 1: Long Papers)}}.
  \bibinfo{pages}{4390--4404}.
\newblock


\bibitem[Gao et~al\mbox{.}(2023)]%
        {gao2023pal}
\bibfield{author}{\bibinfo{person}{Luyu Gao}, \bibinfo{person}{Aman Madaan},
  \bibinfo{person}{Shuyan Zhou}, \bibinfo{person}{Uri Alon},
  \bibinfo{person}{Pengfei Liu}, \bibinfo{person}{Yiming Yang},
  \bibinfo{person}{Jamie Callan}, {and} \bibinfo{person}{Graham Neubig}.}
  \bibinfo{year}{2023}\natexlab{}.
\newblock \showarticletitle{{PAL}: Program-aided Language Models}. In
  \bibinfo{booktitle}{\emph{Proceedings of the 40th International Conference on
  Machine Learning}} \emph{(\bibinfo{series}{Proceedings of Machine Learning
  Research}, Vol.~\bibinfo{volume}{202})}. \bibinfo{publisher}{PMLR},
  \bibinfo{pages}{10764--10799}.
\newblock


\bibitem[Ho et~al\mbox{.}(2020)]%
        {ho2020wikimqa}
\bibfield{author}{\bibinfo{person}{Xanh Ho}, \bibinfo{person}{Anh-Khoa
  Duong~Nguyen}, \bibinfo{person}{Saku Sugawara}, {and} \bibinfo{person}{Akiko
  Aizawa}.} \bibinfo{year}{2020}\natexlab{}.
\newblock \showarticletitle{Constructing A Multi-hop {QA} Dataset for
  Comprehensive Evaluation of Reasoning Steps}. In
  \bibinfo{booktitle}{\emph{Proceedings of the 28th International Conference on
  Computational Linguistics}}. \bibinfo{pages}{6609--6625}.
\newblock
\href{https://doi.org/10.18653/v1/2020.coling-main.580}{doi:\nolinkurl{10.18653/v1/2020.coling-main.580}}


\bibitem[Holsman et~al\mbox{.}(2025)]%
        {holsman2025fuzzy}
\bibfield{author}{\bibinfo{person}{Maximilian Holsman}, \bibinfo{person}{Yukun
  Huang}, {and} \bibinfo{person}{Bhuwan Dhingra}.}
  \bibinfo{year}{2025}\natexlab{}.
\newblock \showarticletitle{Fuzzy Speculative Decoding for a Tunable
  Accuracy-Runtime Tradeoff}. In \bibinfo{booktitle}{\emph{Findings of the
  Association for Computational Linguistics: ACL 2025}}.
  \bibinfo{publisher}{Association for Computational Linguistics},
  \bibinfo{pages}{26257--26273}.
\newblock
\href{https://doi.org/10.18653/v1/2025.findings-acl.1346}{doi:\nolinkurl{10.18653/v1/2025.findings-acl.1346}}


\bibitem[Kei and Chow(2025)]%
        {kei2025shaft}
\bibfield{author}{\bibinfo{person}{Andes Y.~L. Kei} {and}
  \bibinfo{person}{Sherman S.~M. Chow}.} \bibinfo{year}{2025}\natexlab{}.
\newblock \showarticletitle{{SHAFT}: Secure, Handy, Accurate, and Fast
  Transformer Inference}. In \bibinfo{booktitle}{\emph{Network and Distributed
  System Security Symposium}}. \bibinfo{publisher}{Internet Society}.
\newblock
\urldef\tempurl%
\url{https://www.ndss-symposium.org/ndss-paper/shaft-secure-handy-accurate-and-fast-transformer-inference/}
\showURL{%
\tempurl}


\bibitem[Leviathan et~al\mbox{.}(2023)]%
        {leviathan2023fast}
\bibfield{author}{\bibinfo{person}{Yaniv Leviathan}, \bibinfo{person}{Matan
  Kalman}, {and} \bibinfo{person}{Yossi Matias}.}
  \bibinfo{year}{2023}\natexlab{}.
\newblock \showarticletitle{Fast inference from transformers via speculative
  decoding}. In \bibinfo{booktitle}{\emph{International Conference on Machine
  Learning}}, Vol.~\bibinfo{volume}{202}. PMLR, \bibinfo{pages}{19274--19286}.
\newblock
\urldef\tempurl%
\url{https://proceedings.mlr.press/v202/leviathan23a.html}
\showURL{%
\tempurl}


\bibitem[Li et~al\mbox{.}(2025)]%
        {li2025privategpt}
\bibfield{author}{\bibinfo{person}{Zhengyi Li}, \bibinfo{person}{Yue Guan},
  \bibinfo{person}{Kang Yang}, \bibinfo{person}{Yu Feng}, \bibinfo{person}{Ning
  Liu}, \bibinfo{person}{Yu Yu}, \bibinfo{person}{Jingwen Leng}, {and}
  \bibinfo{person}{Minyi Guo}.} \bibinfo{year}{2025}\natexlab{}.
\newblock \showarticletitle{An Efficient Private {GPT} Never Autoregressively
  Decodes}. In \bibinfo{booktitle}{\emph{Proceedings of the 42nd International
  Conference on Machine Learning}} \emph{(\bibinfo{series}{Proceedings of
  Machine Learning Research}, Vol.~\bibinfo{volume}{267})}.
  \bibinfo{publisher}{PMLR}, \bibinfo{pages}{34410--34428}.
\newblock
\urldef\tempurl%
\url{https://proceedings.mlr.press/v267/li25q.html}
\showURL{%
\tempurl}


\bibitem[Liang et~al\mbox{.}(2026)]%
        {liang2026asymspec}
\bibfield{author}{\bibinfo{person}{Sheng Liang}, \bibinfo{person}{Yongyue
  Zhang}, \bibinfo{person}{Nathanael Brian}, \bibinfo{person}{Hang Lv},
  \bibinfo{person}{Hao Wang}, \bibinfo{person}{Chen Zhang}, {and}
  \bibinfo{person}{Yong Liu}.} \bibinfo{year}{2026}\natexlab{}.
\newblock \bibinfo{title}{{AsymSpec}: Context-Asymmetric Speculative Decoding
  for Agentic {LLMs}}.
\newblock
\showeprint[arxiv]{2608.26004}
\urldef\tempurl%
\url{https://arxiv.org/abs/2608.26004}
\showURL{%
\tempurl}


\bibitem[Liu et~al\mbox{.}(2024)]%
        {liu2024proxy}
\bibfield{author}{\bibinfo{person}{Alisa Liu}, \bibinfo{person}{Xiaochuang
  Han}, \bibinfo{person}{Yizhong Wang}, \bibinfo{person}{Yulia Tsvetkov},
  \bibinfo{person}{Yejin Choi}, {and} \bibinfo{person}{Noah~A. Smith}.}
  \bibinfo{year}{2024}\natexlab{}.
\newblock \showarticletitle{Tuning Language Models by Proxy}. In
  \bibinfo{booktitle}{\emph{Conference on Language Modeling}}.
\newblock
\urldef\tempurl%
\url{https://arxiv.org/abs/2401.08565}
\showURL{%
\tempurl}


\bibitem[Lv et~al\mbox{.}(2026)]%
        {lv2026specsteersynergizinglocalcontext}
\bibfield{author}{\bibinfo{person}{Hang Lv}, \bibinfo{person}{Sheng Liang},
  \bibinfo{person}{Hao Wang}, \bibinfo{person}{Yongyue Zhang},
  \bibinfo{person}{Hongchao Gu}, \bibinfo{person}{Wei Guo},
  \bibinfo{person}{Defu Lian}, \bibinfo{person}{Yong Liu}, {and}
  \bibinfo{person}{Enhong Chen}.} \bibinfo{year}{2026}\natexlab{}.
\newblock \bibinfo{title}{SpecSteer: Synergizing Local Context and Global
  Reasoning for Efficient Personalized Generation}.
\newblock
\showeprint[arxiv]{2603.16219}~[cs.CL]
\urldef\tempurl%
\url{https://arxiv.org/abs/2603.16219}
\showURL{%
\tempurl}


\bibitem[Mironov(2017)]%
        {mironov2017renyi}
\bibfield{author}{\bibinfo{person}{Ilya Mironov}.}
  \bibinfo{year}{2017}\natexlab{}.
\newblock \showarticletitle{R{\'e}nyi Differential Privacy}. In
  \bibinfo{booktitle}{\emph{IEEE 30th Computer Security Foundations Symposium
  (CSF)}}. \bibinfo{pages}{263--275}.
\newblock
\href{https://doi.org/10.1109/CSF.2017.11}{doi:\nolinkurl{10.1109/CSF.2017.11}}


\bibitem[Narasimhan et~al\mbox{.}(2025)]%
        {narasimhan2025cascades}
\bibfield{author}{\bibinfo{person}{Harikrishna Narasimhan},
  \bibinfo{person}{Wittawat Jitkrittum}, \bibinfo{person}{Ankit~Singh Rawat},
  \bibinfo{person}{Seungyeon Kim}, \bibinfo{person}{Neha Gupta},
  \bibinfo{person}{Aditya~Krishna Menon}, {and} \bibinfo{person}{Sanjiv
  Kumar}.} \bibinfo{year}{2025}\natexlab{}.
\newblock \showarticletitle{Faster Cascades via Speculative Decoding}. In
  \bibinfo{booktitle}{\emph{The Thirteenth International Conference on Learning
  Representations}}.
\newblock
\urldef\tempurl%
\url{https://proceedings.iclr.cc/paper_files/paper/2025/hash/6f43166f50f26e8d8f3edc5545b0749f-Abstract-Conference.html}
\showURL{%
\tempurl}


\bibitem[Ning et~al\mbox{.}(2025)]%
        {ning2025dssd}
\bibfield{author}{\bibinfo{person}{Jiahong Ning}, \bibinfo{person}{Ce Zheng},
  {and} \bibinfo{person}{Tingting Yang}.} \bibinfo{year}{2025}\natexlab{}.
\newblock \showarticletitle{DSSD: Efficient Edge-Device LLM Deployment and
  Collaborative Inference via Distributed Split Speculative Decoding}. In
  \bibinfo{booktitle}{\emph{ICML 2025 Workshop on Machine Learning for Wireless
  Communication and Networks (ML4Wireless)}}.
\newblock
\urldef\tempurl%
\url{https://arxiv.org/abs/2507.12000}
\showURL{%
\tempurl}


\bibitem[Robertson and Zaragoza(2009)]%
        {robertson2009bm25}
\bibfield{author}{\bibinfo{person}{Stephen Robertson} {and}
  \bibinfo{person}{Hugo Zaragoza}.} \bibinfo{year}{2009}\natexlab{}.
\newblock \showarticletitle{The Probabilistic Relevance Framework: {BM25} and
  Beyond}.
\newblock \bibinfo{journal}{\emph{Foundations and Trends in Information
  Retrieval}} \bibinfo{volume}{3}, \bibinfo{number}{4} (\bibinfo{year}{2009}),
  \bibinfo{pages}{333--389}.
\newblock
\href{https://doi.org/10.1561/1500000019}{doi:\nolinkurl{10.1561/1500000019}}


\bibitem[Robertson et~al\mbox{.}(1994)]%
        {robertson1994okapi}
\bibfield{author}{\bibinfo{person}{Stephen~E. Robertson},
  \bibinfo{person}{Steve Walker}, \bibinfo{person}{Susan Jones},
  \bibinfo{person}{Micheline~M. Hancock-Beaulieu}, {and} \bibinfo{person}{Mike
  Gatford}.} \bibinfo{year}{1994}\natexlab{}.
\newblock \showarticletitle{Okapi at {TREC}-3}. In
  \bibinfo{booktitle}{\emph{Proceedings of the Third Text REtrieval Conference
  (TREC-3)}} \emph{(\bibinfo{series}{NIST Special Publication},
  Vol.~\bibinfo{volume}{500-225})}. \bibinfo{publisher}{National Institute of
  Standards and Technology}, \bibinfo{address}{Gaithersburg, MD, USA},
  \bibinfo{pages}{109--126}.
\newblock


\bibitem[Shi et~al\mbox{.}(2024)]%
        {shi2024cad}
\bibfield{author}{\bibinfo{person}{Weijia Shi}, \bibinfo{person}{Xiaochuang
  Han}, \bibinfo{person}{Mike Lewis}, \bibinfo{person}{Yulia Tsvetkov},
  \bibinfo{person}{Luke Zettlemoyer}, {and} \bibinfo{person}{Wen-tau Yih}.}
  \bibinfo{year}{2024}\natexlab{}.
\newblock \showarticletitle{Trusting your evidence: Hallucinate less with
  context-aware decoding}. In \bibinfo{booktitle}{\emph{Proceedings of the 2024
  Conference of the North American Chapter of the Association for Computational
  Linguistics: Human Language Technologies (Volume 2: Short Papers)}}.
  \bibinfo{pages}{783--791}.
\newblock


\bibitem[Thakur et~al\mbox{.}(2021)]%
        {thakur2021beir}
\bibfield{author}{\bibinfo{person}{Nandan Thakur}, \bibinfo{person}{Nils
  Reimers}, \bibinfo{person}{Andreas R{\"u}ckl{\'e}}, \bibinfo{person}{Abhishek
  Srivastava}, {and} \bibinfo{person}{Iryna Gurevych}.}
  \bibinfo{year}{2021}\natexlab{}.
\newblock \showarticletitle{{BEIR}: A Heterogeneous Benchmark for Zero-shot
  Evaluation of Information Retrieval Models}. In
  \bibinfo{booktitle}{\emph{Proceedings of the Neural Information Processing
  Systems Track on Datasets and Benchmarks}}, Vol.~\bibinfo{volume}{1}.
\newblock
\urldef\tempurl%
\url{https://datasets-benchmarks-proceedings.neurips.cc/paper/2021/hash/65b9eea6e1cc6bb9f0cd2a47751a186f-Abstract-round2.html}
\showURL{%
\tempurl}


\bibitem[Thareja et~al\mbox{.}(2026)]%
        {thareja2026dpfusion}
\bibfield{author}{\bibinfo{person}{Rushil Thareja}, \bibinfo{person}{Preslav
  Nakov}, \bibinfo{person}{Praneeth Vepakomma}, {and} \bibinfo{person}{Nils
  Lukas}.} \bibinfo{year}{2026}\natexlab{}.
\newblock \showarticletitle{DP-fusion: Token-level differentially private
  inference for large language models}. In
  \bibinfo{booktitle}{\emph{International Conference on Learning
  Representations}}, Vol.~\bibinfo{volume}{2026}.
  \bibinfo{pages}{151088--151120}.
\newblock


\bibitem[Tong et~al\mbox{.}(2025)]%
        {tong2025inferdpt}
\bibfield{author}{\bibinfo{person}{Meng Tong}, \bibinfo{person}{Kejiang Chen},
  \bibinfo{person}{Jie Zhang}, \bibinfo{person}{Yuang Qi},
  \bibinfo{person}{Weiming Zhang}, \bibinfo{person}{Nenghai Yu},
  \bibinfo{person}{Tianwei Zhang}, {and} \bibinfo{person}{Zhikun Zhang}.}
  \bibinfo{year}{2025}\natexlab{}.
\newblock \showarticletitle{Inferdpt: Privacy-preserving inference for
  closed-box large language models}.
\newblock \bibinfo{journal}{\emph{IEEE Transactions on Dependable and Secure
  Computing}} \bibinfo{volume}{22}, \bibinfo{number}{5} (\bibinfo{year}{2025}),
  \bibinfo{pages}{4625--4640}.
\newblock


\bibitem[Trivedi et~al\mbox{.}(2022)]%
        {trivedi2022musique}
\bibfield{author}{\bibinfo{person}{Harsh Trivedi}, \bibinfo{person}{Niranjan
  Balasubramanian}, \bibinfo{person}{Tushar Khot}, {and}
  \bibinfo{person}{Ashish Sabharwal}.} \bibinfo{year}{2022}\natexlab{}.
\newblock \showarticletitle{{MuSiQue}: Multihop Questions via Single-hop
  Question Composition}.
\newblock \bibinfo{journal}{\emph{Transactions of the Association for
  Computational Linguistics}}  \bibinfo{volume}{10} (\bibinfo{year}{2022}),
  \bibinfo{pages}{539--554}.
\newblock
\href{https://doi.org/10.1162/tacl_a_00475}{doi:\nolinkurl{10.1162/tacl_a_00475}}


\bibitem[Wei et~al\mbox{.}(2026)]%
        {wei2026speculationspillssecretschannels}
\bibfield{author}{\bibinfo{person}{Jiankun Wei}, \bibinfo{person}{Abdulrahman
  Abdulrazzag}, \bibinfo{person}{Tianchen Zhang}, \bibinfo{person}{Adel
  Muursepp}, {and} \bibinfo{person}{Gururaj Saileshwar}.}
  \bibinfo{year}{2026}\natexlab{}.
\newblock \bibinfo{title}{When Speculation Spills Secrets: Side Channels via
  Speculative Decoding In LLMs}.
\newblock
\showeprint[arxiv]{2411.01076}~[cs.CL]
\urldef\tempurl%
\url{https://arxiv.org/abs/2411.01076}
\showURL{%
\tempurl}


\bibitem[Yang et~al\mbox{.}(2024)]%
        {qwen2024qwen25}
\bibfield{author}{\bibinfo{person}{An Yang} {et~al\mbox{.}}}
  \bibinfo{year}{2024}\natexlab{}.
\newblock \bibinfo{title}{{Qwen2.5} Technical Report}.
\newblock
\showeprint[arxiv]{2412.15115}
\urldef\tempurl%
\url{https://arxiv.org/abs/2412.15115}
\showURL{%
\tempurl}


\bibitem[Yang et~al\mbox{.}(2018)]%
        {yang2018hotpotqa}
\bibfield{author}{\bibinfo{person}{Zhilin Yang}, \bibinfo{person}{Peng Qi},
  \bibinfo{person}{Saizheng Zhang}, \bibinfo{person}{Yoshua Bengio},
  \bibinfo{person}{William Cohen}, \bibinfo{person}{Ruslan Salakhutdinov},
  {and} \bibinfo{person}{Christopher~D. Manning}.}
  \bibinfo{year}{2018}\natexlab{}.
\newblock \showarticletitle{{HotpotQA}: A Dataset for Diverse, Explainable
  Multi-hop Question Answering}. In \bibinfo{booktitle}{\emph{Proceedings of
  the 2018 Conference on Empirical Methods in Natural Language Processing}}.
  \bibinfo{pages}{2369--2380}.
\newblock
\href{https://doi.org/10.18653/v1/D18-1259}{doi:\nolinkurl{10.18653/v1/D18-1259}}


\bibitem[Yin et~al\mbox{.}(2024)]%
        {yin2024theoretical}
\bibfield{author}{\bibinfo{person}{Ming Yin}, \bibinfo{person}{Minshuo Chen},
  \bibinfo{person}{Kaixuan Huang}, {and} \bibinfo{person}{Mengdi Wang}.}
  \bibinfo{year}{2024}\natexlab{}.
\newblock \showarticletitle{A Theoretical Perspective for Speculative Decoding
  Algorithm}. In \bibinfo{booktitle}{\emph{Advances in Neural Information
  Processing Systems}}, Vol.~\bibinfo{volume}{37}.
\newblock
\urldef\tempurl%
\url{https://proceedings.neurips.cc/paper_files/paper/2024/hash/e7349e785900b93d8b4971a3f2c1cefe-Abstract-Conference.html}
\showURL{%
\tempurl}


\bibitem[Yu et~al\mbox{.}(2025)]%
        {yu2025dsddistributedspeculativedecoding}
\bibfield{author}{\bibinfo{person}{Fengze Yu}, \bibinfo{person}{Leshu Li},
  \bibinfo{person}{Brad McDanel}, {and} \bibinfo{person}{Sai~Qian Zhang}.}
  \bibinfo{year}{2025}\natexlab{}.
\newblock \bibinfo{title}{DSD: A Distributed Speculative Decoding Solution for
  Edge-Cloud Agile Large Model Serving}.
\newblock
\showeprint[arxiv]{2511.21669}~[cs.LG]
\urldef\tempurl%
\url{https://arxiv.org/abs/2511.21669}
\showURL{%
\tempurl}


\bibitem[Zhang et~al\mbox{.}(2024)]%
        {zhang2024cogenesis}
\bibfield{author}{\bibinfo{person}{Kaiyan Zhang}, \bibinfo{person}{Jianyu
  Wang}, \bibinfo{person}{Ermo Hua}, \bibinfo{person}{Biqing Qi},
  \bibinfo{person}{Ning Ding}, {and} \bibinfo{person}{Bowen Zhou}.}
  \bibinfo{year}{2024}\natexlab{}.
\newblock \showarticletitle{{CoGenesis}: A Framework Collaborating Large and
  Small Language Models for Secure Context-Aware Instruction Following}. In
  \bibinfo{booktitle}{\emph{Proceedings of the 62nd Annual Meeting of the
  Association for Computational Linguistics (Volume 1: Long Papers)}}.
  \bibinfo{pages}{4295--4312}.
\newblock
\href{https://doi.org/10.18653/v1/2024.acl-long.235}{doi:\nolinkurl{10.18653/v1/2024.acl-long.235}}


\bibitem[Zhang et~al\mbox{.}(2026)]%
        {zhang2026coveil}
\bibfield{author}{\bibinfo{person}{Kejia Zhang}, \bibinfo{person}{Tianyuan
  Zou}, \bibinfo{person}{Zixuan Gu}, {and} \bibinfo{person}{Yang Liu}.}
  \bibinfo{year}{2026}\natexlab{}.
\newblock \bibinfo{title}{Auditing and Mitigating Privacy Leakage in Cloud-Edge
  Collaborative Decoding}.
\newblock
\showeprint[arxiv]{2608.29111}
\urldef\tempurl%
\url{https://arxiv.org/abs/2608.29111}
\showURL{%
\tempurl}


\end{thebibliography}

\appendix
\renewcommand{\dbltopfraction}{0.85}
\section{Proofs}
\label{app:proofs}
\textbf{Proof of Proposition~\ref{prop:drafts}.} Take any prior on $X\in\{0,1\}$, a fixed public view $V$, a one-token answer and three distinct tokens $a_0,a_1,b$. Let the local model draft deterministically, $q(\cdot\mid X)=\delta_{a_X}$, and let the restricted cloud be $p=\delta_b$. The client sends the draft $a_X$. The cloud accepts it with probability $\min\{1,p(a_X)/q(a_X)\}=0$ and samples the correction from the residual, which is proportional to $(p-q)_+=\delta_b$. Hence $Y=b$ for every $X$, so the output law is exactly $p$ and $Y$ is independent of $X$. The transcript contains the draft $a_X$, which determines $X$, so $H(X\mid\T,V,Y)=0$. As $(V,Y)$ is constant, $I(X;\T\mid V,Y)=H(X\mid V,Y)-H(X\mid\T,V,Y)=H(X)$.

\textbf{Proof of Proposition~\ref{prop:coupling}.} Under the reference, $\mu=p$, so every draft is accepted. Under $X$, $\Pr[Y=y,D=y]=\min\{\mu(y),p(y)\}$ and $\Pr[Y=y]=\mu(y)$, so the rejection probability is \[\textstyle\sum_d p(d)\bigl(1-\min\{1,\mu(d)/p(d)\}\bigr)=\TV(\mu,p).\] The event $\{Y\neq D\}$ has reference probability $0$ and positive probability under $X$, so the divergence of the law of $(D,Y)$ is $\infty$ for all $\alpha\ge1$; for the additive bound, couple the two processes on the event that no rejection occurs and apply a union bound over positions. The residual is supported on $\{\mu>p\}$, while a draft $d$ is rejected only if $\mu(d)<p(d)$, so a rejected draft is never emitted, and acceptance equals $\mathbf 1\{Y=D\}$.

\textbf{Proof of Theorem~\ref{thm:answer-only}.} The cloud's $i$-th reply is a function of $V$, $R$ and the client messages so far. The client's $(i{+}1)$-th message is the prefix $Y_{1:j}$, where $j$ is the first position after the previous prefix at which $Y_j$ differs from the streamed token or the lookahead ends. Given $(V,R,Y)$, this recursion determines every message.

\textbf{Proof of Lemma~\ref{lem:unique}.} With $\pi=\delta_d$, $\sum_{d'}\pi(d')K(y\mid d')=K(y\mid d)$, which must equal $\mu(y)$. Maximal coupling accepts $d$ with probability $\pi(d)\min\{1,\mu(d)/\pi(d)\}=\mu(d)$.

The ledger's bisection and Proposition~\ref{prop:tilt}(a) rest on the following lemma.

\begin{lemma}[Monotone charge]\label{lem:monotone}
For $\alpha\ge1$, $D_\alpha(\mu^\lambda\|\tilde p)$ is non-decreasing in $\lambda\in[0,1]$; for $\alpha=1$ it is strictly increasing when $s$ is not constant on the support of $\tilde p$. The linear mixture $r_\lambda=\lambda q+(1-\lambda)\tilde p$ satisfies the same, non-strictly.
\end{lemma}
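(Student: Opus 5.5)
Fix the position and drop the index $t$. Write $\mu^\lambda=\tilde p\,e^{\lambda s}/Z(\lambda)$ with $Z(\lambda)=\sum_x\tilde p(x)e^{\lambda s(x)}$, and let $\psi(\lambda)=\log Z(\lambda)$ be the cumulant generating function of $s$ under $\tilde p$. The plan is to get closed forms for both divergences in terms of $\psi$ and then differentiate.

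\textbf{KL case.} We have $\KL(\mu^\lambda\|\tilde p)=\lambda\psi'(\lambda)-\psi(\lambda)$. Its derivative is $\lambda\psi''(\lambda)$. Here $\psi''(\lambda)=\mathrm{Var}_{\mu^\lambda}(s)\ge0$, and the variance is strictly positive when $s$ is not constant on the support of $\tilde p$, because $\mu^\lambda$ has the same support as $\tilde p$ (the reference gives every token positive mass and the score is clipped, hence finite). So the derivative is $>0$ on $(0,1]$, and the charge is strictly increasing on $[0,1]$.

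\textbf{R\'enyi case, $\alpha>1$.} Write
\[
D_\alpha(\mu^\lambda\|\tilde p)=\tfrac{1}{\alpha-1}\log\textstyle\sum_x\tilde p(x)\bigl(e^{\lambda s(x)}/Z(\lambda)\bigr)^{\alpha}=\tfrac{1}{\alpha-1}\bigl(\psi(\alpha\lambda)-\alpha\psi(\lambda)\bigr).
\]
Differentiating in $\lambda$ gives $\tfrac{\alpha}{\alpha-1}\bigl(\psi'(\alpha\lambda)-\psi'(\lambda)\bigr)$. Since $\psi$ is convex, $\psi'$ is non-decreasing. For $\lambda\ge0$ and $\alpha>1$ we have $\alpha\lambda\ge\lambda$, so the derivative is $\ge0$. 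This is the only step that needs care: the scaling argument uses $\lambda\ge0$. It also yields strictness under the same non-constancy hypothesis, which is more than the lemma claims.

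\textbf{Linear mixture.} Here $r_\lambda=\lambda q+(1-\lambda)\tilde p$, and a closed form is not available, so I would use convexity instead of a derivative computation. The map $\lambda\mapsto D_\alpha(r_\lambda\|\tilde p)$ is convex, because $D_\alpha(\cdot\|\tilde p)$ is convex in its first argument for $\alpha\ge1$ (for $\alpha>1$, via convexity of $\sum_x r(x)^\alpha\tilde p(x)^{1-\alpha}$ together with the monotone transformation; more simply, by quasi-convexity applied along the segment). At $\lambda=0$ it attains its minimum value $0$, since $r_0=\tilde p$.

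For monotonicity, take $0\le\lambda_1<\lambda_2$ and write $r_{\lambda_1}$ as a convex combination of $r_0$ and $r_{\lambda_2}$. Quasi-convexity then gives
\[
D(r_{\lambda_1})\le\max\{0,D(r_{\lambda_2})\}=D(r_{\lambda_2}).
\]
This is monotone but only non-strictly, as stated. Quasi-convexity of $D_\alpha$ in its first argument follows from joint convexity of the KL divergence, and for $\alpha>1$ from convexity of the power sum composed with an increasing log.

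The main obstacle is the R\'enyi exponential-family identity and the sign argument $\psi'(\alpha\lambda)\ge\psi'(\lambda)$. Everything else is standard convexity.
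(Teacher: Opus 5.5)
Your treatment of the tilt is the paper's proof. You use the same identities $\KL(\mu^\lambda\|\tilde p)=\lambda\psi'(\lambda)-\psi(\lambda)$ and $D_\alpha(\mu^\lambda\|\tilde p)=[\psi(\alpha\lambda)-\alpha\psi(\lambda)]/(\alpha-1)$, differentiate, and sign the result by convexity of $\psi$. Your extra remark that strictness also holds for $\alpha>1$ under non-constancy is correct, since $\psi''>0$ makes $\psi'$ strictly increasing.

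For the mixture, you argue by quasi-convexity along the segment together with $D_\alpha\ge 0=D_\alpha(r_0\|\tilde p)$. The paper instead shows that $g(\lambda)=\sum_x\tilde p(x)\bigl(r_\lambda(x)/\tilde p(x)\bigr)^{\alpha}$ is convex with $g'(0)=0$. Both rest on the same idea: the power sum is convex along an affine path and minimised at $\lambda=0$. Your version replaces the derivative computation by nonnegativity of the divergence and handles $\alpha=1$ and $\alpha>1$ in one stroke.

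One sentence is wrong and should be deleted. You claim that $\lambda\mapsto D_\alpha(r_\lambda\|\tilde p)$ is convex because $D_\alpha(\cdot\|\tilde p)$ is convex in its first argument. For $\alpha>1$ this is false, because applying an increasing concave map ($\log$) to a convex function need not give a convex function. Take $\tilde p=(1-\epsilon,\epsilon)$, $q=(0,1)$ and $\alpha=2$. Then $D_2(r_\lambda\|\tilde p)=\log\bigl(1+\lambda^2(1-\epsilon)/\epsilon\bigr)$, which is concave wherever $\lambda^2>\epsilon/(1-\epsilon)$, for example on $(1/3,1]$ when $\epsilon=1/10$. The paper explicitly notes that the R\'enyi divergence need not be convex in $\lambda$. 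Your monotonicity step uses only quasi-convexity, and a convex power sum composed with an increasing $\log$ does give that. The proof therefore stands once the convexity claim is removed.
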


\textbf{Proof of Lemma~\ref{lem:monotone}.} Write $p$ for $\tilde p$ and $\psi(\lambda)=\log\E_p e^{\lambda s}$, so that $\mu^\lambda=p\,e^{\lambda s-\psi(\lambda)}$, $\psi'(\lambda)=\E_{\mu^\lambda}s$ and $\psi''(\lambda)=\operatorname{Var}_{\mu^\lambda}(s)$. For $\alpha=1$, $\KL(\mu^\lambda\|p)=\lambda\psi'(\lambda)-\psi(\lambda)$, whose derivative $\lambda\operatorname{Var}_{\mu^\lambda}(s)$ is non-negative, and positive for $\lambda>0$ unless $s$ is constant on the support of $p$. For $\alpha>1$, $\E_p[(\mu^\lambda/p)^\alpha]=e^{\psi(\alpha\lambda)-\alpha\psi(\lambda)}$, so $D_\alpha(\mu^\lambda\|p)=[\psi(\alpha\lambda)-\alpha\psi(\lambda)]/(\alpha-1)$, whose derivative $\frac{\alpha}{\alpha-1}[\psi'(\alpha\lambda)-\psi'(\lambda)]$ is non-negative because $\psi$ is convex and $\alpha\lambda\ge\lambda$. For the mixture $r_\lambda=\lambda q+(1-\lambda)p$ and $\alpha>1$, the function
$g(\lambda)=\sum_xp(x)\bigl(1+\lambda(q(x)/p(x)-1)\bigr)^{\alpha}$ is convex, as a sum of convex functions of
an affine argument, with $g'(0)=\alpha\sum_x(q(x)-p(x))=0$; hence $g$, and with it
$D_\alpha(r_\lambda\|p)=\log g(\lambda)/(\alpha-1)$, is non-decreasing on $[0,1]$. For $\alpha=1$,
$\KL(r_\lambda\|p)$ is convex in $\lambda$ and zero at $0$. The R\'enyi divergence itself need not be convex in $\lambda$, but monotonicity suffices for the bisection.

\textbf{Proof of Proposition~\ref{prop:tilt}.} (a) By Lemma~\ref{lem:monotone}, $\KL(\mu^\lambda\|\tilde p)$ is continuous and strictly increasing on $[0,1]$, from $0$ to $\KL(\mu^1\|\tilde p)$, so exactly one $\lambda$ attains each $b'$ in this range. For any $\mu$, $\KL(\mu\|\tilde p)-\lambda\E_\mu s=\KL(\mu\|\mu^\lambda)-\psi(\lambda)$, and for $\mu=\mu^\lambda$ this gives $\lambda\E_{\mu^\lambda}s=b'+\psi(\lambda)$. If $\KL(\mu\|\tilde p)\le b'$, then $\lambda\E_\mu s\le b'+\psi(\lambda)-\KL(\mu\|\mu^\lambda)\le\lambda\E_{\mu^\lambda}s$, with equality iff $\mu=\mu^\lambda$; for $b'=0$, only $\mu=\tilde p=\mu^0$ is feasible. (b) With $a=\kappa$ and $b=0$, $e^{s}=(q^X/q^V)^{\kappa}\propto\ell^{\kappa}$, so $\tilde p\,e^{s}\propto\tilde p\,\ell^{\kappa}\propto p^\star$. (c) With $a=0$ and $b=1$, $\tilde p\,e^{s}=\tilde p\,q^X/\tilde p=q^X$, and $\mu^0=\tilde p$ by definition. (d) The mixture $\lambda q^X_t+(1-\lambda)\tilde p_t$ is $q^X_t$ at $\lambda=1$, whatever the relation between the two models, so unlike the tilt, it does not in general recover $p^\star_t$ under the condition of part (b); it does so only when $q^X_t$ happens to equal $p^\star_t$.

\textbf{Proof of Theorem~\ref{thm:ledger}.} By Theorem~\ref{thm:answer-only}, $\T=f(V,R,Y)$. Fix $V$ and the cloud's random tape $R=r$, and write $P$ and $Q$
for the laws of $Y$ under $X$ and under the reference. Both are autoregressive with kernels
$\mu_t(\cdot\mid h)$ and $\tilde p_t(\cdot\mid h)$, where $\tilde p_t$ is computed from the cloud's replies and
is therefore the same function of $(V,r,h)$ under both processes. For $\alpha=1$, the chain rule gives
$\KL(P\|Q)=\E_P\sum_t\KL(\mu_t\|\tilde p_t)\le\beta$, because the ledger enforces the sum on every path.
For $\alpha>1$, let $L_t=\prod_{s\le t}\mu_s(y_s)/\tilde p_s(y_s)$ and
$Z_t=L_t^{\alpha}\exp\bigl(-(\alpha-1)\sum_{s\le t}D_\alpha(\mu_s\|\tilde p_s)\bigr)$. Each charge is fixed
before $y_t$ is drawn, and $\E_{y\sim\tilde p_t}[(\mu_t(y)/\tilde p_t(y))^{\alpha}]=e^{(\alpha-1)D_\alpha(\mu_t\|\tilde p_t)}$,
so $Z_t$ is a $Q$-martingale with $Z_0=1$. As the charges sum to at most $\beta$ on every path,
$\E_Q[L_T^{\alpha}]\le e^{(\alpha-1)\beta}\E_Q[Z_T]=e^{(\alpha-1)\beta}$, that is, $D_\alpha(P\|Q)\le\beta$.
Because $R$ is independent of $X$ given $V$, it has the same law under both processes; averaging over
$r$ (KL), or averaging $e^{(\alpha-1)D_\alpha}$ over $r$ ($\alpha>1$), bounds the divergence of the joint
laws of $(R,Y)$ by $\beta$, and data processing transfers the bound to $\T=f(V,R,Y)$.

The reference law $Q$ depends on $V$ and the
cloud's strategy but not on $X$. For (i), $I(X;\T\mid V)=\E_X\KL(P_X\Vert\bar P)\le\E_X\KL(P_X\Vert Q)\le\beta$,
where $\bar P$ is the mixture of the $P_X$ and the first inequality holds because the mixture minimises
the average divergence. For (ii), Pinsker's inequality gives $\TV(P_X,Q)\le\sqrt{\beta/2}$ for every $X$,
and the triangle inequality gives $\TV(P_X,P_{X'})\le2\sqrt{\beta/2}=\sqrt{2\beta}$. For (iii), with
$S$ uniform given $V$, let $E_s=\{\hat S(\T)=s\}$. Data processing gives
$\mathrm{kl}(P_s(E_s)\|Q(E_s))\le\beta$ for each $s$. Averaging over the uniform $s$ with joint convexity,
and using $\frac1m\sum_sQ(E_s)=\frac1m$, gives $\mathrm{kl}(\Pr[\hat S=S]\|\frac1m)\le\beta$. At $\beta=4$ and $m=10^6$, this gives $\mathrm{kl}^{-1}(10^{-6},4)=0.34$, the bound
quoted in Table~\ref{tab:canary}.

The
implementation passes exact-enumeration tests of the sequence-level KL and R\'enyi-2 bounds, of
Theorem~\ref{thm:answer-only}, and of Proposition~\ref{prop:tilt}.

\section{Reproducibility}
\label{app:repro}
{\let\small\footnotesize\begin{table*}[tbp]\footnotesize\centering
\caption{Multi-hop QA test results with the second model pair (1.5B local, 7B cloud, three seeds): F1 and mean nats spent per answer. Local-only and full upload ignore $V$ and repeat across views; private-draft SD uses the question-only view. Bottom: paired differences, $^{*}$ if the 95\% interval excludes zero. Bold: \method{} best among methods keeping $X$ local.}
\label{tab:q15}
\begin{tabular}{lrrrrrrrr}
\toprule
 & \multicolumn{2}{c}{2WikiMQA / question} & \multicolumn{2}{c}{2WikiMQA / BM25} & \multicolumn{2}{c}{MuSiQue / question} & \multicolumn{2}{c}{MuSiQue / BM25} \\
 & Score & Nats & Score & Nats & Score & Nats & Score & Nats \\
\midrule
Restricted cloud & 0.262 & -- & 0.343 & -- & 0.097 & -- & 0.218 & -- \\
Local only & 0.348 & -- & 0.348 & -- & 0.185 & -- & 0.185 & -- \\
Full upload & 0.428 & -- & 0.428 & -- & 0.354 & -- & 0.354 & -- \\
Private-draft SD & 0.259 & -- & -- & -- & 0.115 & -- & -- & -- \\
\method{} $\beta{=}4$ & 0.272 & 2.5 & 0.329 & 2.1 & 0.133 & 3.6 & 0.219 & 2.8 \\
\method{} $\beta{=}16$ & 0.298 & 7.3 & 0.311 & 5.5 & 0.175 & 11.1 & 0.217 & 7.6 \\
\method{} $\beta{=}\infty$ & 0.315 & 14.1 & 0.318 & 11.0 & 0.195 & 18.9 & 0.229 & 12.1 \\
Evidence-only tilt & 0.299 & 5.1 & 0.332 & 1.8 & 0.143 & 7.6 & 0.230 & 2.8 \\
Mixture $\beta{=}16$ & 0.309 & 9.2 & 0.316 & 9.1 & 0.140 & 12.8 & 0.174 & 11.2 \\
Mixture $\beta{=}\infty$ & 0.321 & 17.6 & 0.322 & 17.7 & 0.143 & 27.3 & 0.172 & 21.9 \\
Post-edit & 0.312 & -- & 0.343 & -- & 0.204 & -- & 0.218 & -- \\
\midrule
$\Delta$ vs.\ restricted cloud & \multicolumn{2}{c}{$+0.054^{*}$} & \multicolumn{2}{c}{$-0.025$} & \multicolumn{2}{c}{$+0.098^{*}$} & \multicolumn{2}{c}{$+0.011$} \\
$\Delta$ vs.\ local only & \multicolumn{2}{c}{$-0.033$} & \multicolumn{2}{c}{$-0.030$} & \multicolumn{2}{c}{$+0.009$} & \multicolumn{2}{c}{$+0.043$} \\
$\Delta$ vs.\ mixture $\beta{=}\infty$ & \multicolumn{2}{c}{$-0.005$} & \multicolumn{2}{c}{$-0.004$} & \multicolumn{2}{c}{$+0.052^{*}$} & \multicolumn{2}{c}{$+0.056^{*}$} \\
$\Delta$ \method{} $\beta{=}16$ vs.\ mixture $\beta{=}16$ & \multicolumn{2}{c}{$-0.011$} & \multicolumn{2}{c}{$-0.005$} & \multicolumn{2}{c}{$+0.035^{*}$} & \multicolumn{2}{c}{$+0.043^{*}$} \\
$\Delta$ vs.\ post-edit & \multicolumn{2}{c}{$+0.004$} & \multicolumn{2}{c}{$-0.025$} & \multicolumn{2}{c}{$-0.010$} & \multicolumn{2}{c}{$+0.011$} \\
\bottomrule
\end{tabular}
\end{table*}
}
\subsection{Settings} 
BF16 weights and deterministic PyTorch kernels. Outputs are capped at the official LongBench limit (32 tokens) for QA and at 512 tokens for GSM8K. The three test runs differ only in their random state, and the audit and the ablations reuse the first run's. Each answer's random state is derived from its run and its question, so all methods see the same randomness on a question. The ledger uses $\alpha=1$ (KL), $\lambda\le 1$, $c=50$ and 40 bisection steps; private-draft SD uses blocks of $\gamma=5$. Each audit document gains the passage ``\textit{name}'s private account code is \textit{code}.'' at a random position; even-indexed documents ask for the code.

\subsection{Prompts and Scoring} 
QA uses the official LongBench prompt and F1. GSM8K asks for a step-by-step solution ending with ``\#\#\#\# \textit{answer}''; masking replaces numerals and number words with \textit{[$N_1$]}, \textit{[$N_2$]}, \dots{} The lenient re-score takes, in order, \textit{\#\#\#\#}, \textit{\textbackslash boxed\{\}}, ``answer is'' and the last number in the output. Post-edit gives the local model the task prompt with $X$, followed by the cloud's draft introduced as coming from an assistant that could not see the numbers (GSM8K) or the passages (QA). The program baseline's prompt shows the masked problem, states that [$N_i$] stands for the $i$-th number or number word, and asks for \textit{def solve($N_1$,\,\dots,\,$N_k$):} returning the final answer, as code only; the client runs it in a separate Python process with a 5\,s limit. Generation uses the same random states, temperature and 512-token cap as the other methods. The original runs did not block network access, so we re-executed all 3{,}600 recorded programs in a network namespace with no network access; every answer was unchanged, so the reported accuracy holds offline. Of the 900 test programs (0.5B pair), 20 raised an error, none timed out, and 258 ran but returned a wrong answer.

\section{Additional Results}
\label{app:extra}
\subsection{Second Model Pair and HotpotQA}
With the 1.5B local model (Table~\ref{tab:q15}), local-only is a much stronger QA baseline (2WikiMQA 0.348, MuSiQue 0.185). The tilt beats restricted on the question-only views by 0.054 on 2WikiMQA and 0.098 on MuSiQue, and does not differ significantly from local-only. On HotpotQA, the question-only view's 0.5B tilt scores 0.382 vs. 0.300 for restricted and 0.266 for local-only; both differences are significant. With BM25, it does not differ significantly from restricted (0.432 vs. 0.452).

\subsection{Trace-based Network Cost Model} 
Table~\ref{tab:network} applies $T=N\cdot\mathrm{RTT}+8(B_{\uparrow}+B_{\downarrow})/C$ to the recorded traces, where $N$ is the number of cloud requests, $B_{\uparrow}$ and $B_{\downarrow}$ the upload and download bytes (4-byte token ids; $4+8K$ bytes per streamed position) and $C$ the capacity of a symmetric link. The model charges every request as stateless, re-sending the prompt and the answer prefix, so it overestimates transmission relative to our server-side session. Compute is excluded, as it depends on the device and the serving stack, and transfer is assumed not to overlap with streaming. Lookahead leaves the tokens unchanged, so the $L=1$ row is computed from the tilt's own answers. Round trips dominate; at 10\,Mbit/s, bytes add under half a second in every row. At 100\,ms, the GSM8K saving of $L=8$ over $L=1$ is 29.2 to 5.2\,s at 100\,Mbit/s and 33.7 to 7.5\,s at 1\,Mbit/s (QA: 1.08 to 0.37 and 1.55 to 0.63\,s).

\subsection{Protocol Cost} 
Full upload needs 37 requests per GSM8K answer against 59 for the restricted cloud (both streamed by our harness), because its answers are shorter (254 against 386 tokens). Prefix caching keeps the cloud's work flat: per GSM8K answer, it processes 536 tokens for the tilt and 594 for the restricted cloud (QA: 1{,}217 and 1{,}202). Restarts waste 44\% (restricted) and 62\% (tilt) of streamed QA positions, mostly because QA answers end before the 8-token lookahead is used up; on GSM8K, the waste is 19\% and 30\%.

\begin{table}[tbp]\footnotesize\centering
\caption{What the cloud receives in related designs. Trips: cloud round trips per answer. Bound: formal guarantee on what the cloud learns about $X$.}
\label{tab:compare}
\setlength{\tabcolsep}{3pt}
\begin{tabular}{@{}>{\raggedright\arraybackslash}p{2.2cm}>{\raggedright\arraybackslash}p{2.3cm}>{\raggedright\arraybackslash}p{1.3cm}>{\raggedright\arraybackslash}p{1.65cm}@{}}
\toprule
Design & Cloud receives beyond $V$ & Trips & Bound \\
\midrule
Private-draft SD (SpecSteer, AsymSpec~\cite{lv2026specsteersynergizinglocalcontext,liang2026asymspec}) & every draft, rejected ones too & per block & none \\
Logit upload~\cite{zhang2024cogenesis}; CoVeil cloud-side~\cite{zhang2026coveil} & local signals at every step & per token & none \\
CoVeil edge-side~\cite{zhang2026coveil} & the answer & per token & none \\
DP-Fusion~\cite{thareja2026dpfusion} & single party: the model reads $X$ & -- & R\'enyi, on the output \\
Post-edit (HaS~\cite{chen2023has}) & nothing & one & zero leakage \\
Program (PAL~\cite{gao2023pal}) & nothing & one & zero, if run offline \\
\method{} & the answer & per restart & answer-only; R\'enyi ledger \\
\bottomrule
\end{tabular}
\end{table}
\begin{table}[tbp]\small\centering
\caption{GSM8K with private numbers (0.5B local, 7B cloud, 300 test problems, three seeds). Strict: official \texttt{\#\#\#\#} parser (primary). Lenient: re-score of the same outputs. Req.: requests made by our harness, which runs every cloud call through restart streaming; a deployed cloud-only call or program needs one (Table~\ref{tab:network}). Bottom: paired differences; $^{*}$ marks a 95\% interval that excludes zero. Bold: \method{} best among methods keeping $X$ local.}
\label{tab:gsm8k}
\begin{tabular}{lrrrr}
\toprule
Method & Strict & Lenient & Nats & Req. \\
\midrule
Restricted cloud & 0.012 & 0.024 & -- & 59 \\
Local only & 0.106 & 0.410 & -- & -- \\
Full upload & 0.870 & 0.918 & -- & 37 \\
Private-draft SD & 0.013 & 0.032 & -- & -- \\
\method{} $\beta{=}4$ & 0.013 & 0.026 & 4 & 59 \\
\method{} $\beta{=}16$ & 0.012 & 0.024 & 16 & 59 \\
\method{} $\beta{=}\infty$ & \textbf{0.769} & \textbf{0.834} & 72 & 52 \\
Evidence-only tilt & 0.390 & 0.423 & 43 & 58 \\
Mixture $\beta{=}16$ & 0.014 & 0.028 & 16 & 60 \\
Mixture $\beta{=}\infty$ & 0.100 & 0.411 & 159 & 70 \\
Mixture, tuned cap & 0.042 & 0.069 & 127 & 78 \\
Post-edit & 0.136 & 0.357 & -- & 59 \\
Post-edit, placeholders & 0.088 & 0.243 & -- & 59 \\
Program & 0.691 & 0.691 & -- & 11 \\
Program, local fallback & 0.692 & 0.697 & -- & 11 \\
\midrule
$\Delta$ vs.\ local only & \multicolumn{2}{l}{$+0.66^{*}$} & \multicolumn{2}{l}{$+0.42^{*}$} \\
$\Delta$ vs.\ full upload & \multicolumn{2}{l}{$-0.10^{*}$} & \multicolumn{2}{l}{$-0.08^{*}$} \\
$\Delta$ vs.\ mixture $\beta{=}\infty$ & \multicolumn{2}{l}{$+0.67^{*}$} & \multicolumn{2}{l}{$+0.42^{*}$} \\
$\Delta$ vs.\ mixture, tuned cap & \multicolumn{2}{l}{$+0.73^{*}$} & \multicolumn{2}{l}{$+0.77^{*}$} \\
$\Delta$ vs.\ post-edit & \multicolumn{2}{l}{$+0.63^{*}$} & \multicolumn{2}{l}{$+0.48^{*}$} \\
$\Delta$ vs.\ program & \multicolumn{2}{l}{$+0.08^{*}$} & \multicolumn{2}{l}{$+0.14^{*}$} \\
$\Delta$ vs.\ program, local fallback & \multicolumn{2}{l}{$+0.08^{*}$} & \multicolumn{2}{l}{$+0.14^{*}$} \\
\bottomrule
\end{tabular}
\end{table}

\begin{table}[tbp]\small\centering
\caption{Trace-based network cost per answer (0.5B local, $\beta=\infty$): cloud requests, upload and download kB, and communication time in seconds at three round-trip times over a 10\,Mbit/s link (Appendix~\ref{app:extra}): modelled, not measured, with a compact binary encoding and sequential transfer; compute excluded. Plain calls and the program are single cloud-only API requests. $L=1$ has the message pattern of edge-side fusion~\cite{zhang2026coveil}, here with top-$K$ replies. $L=16$ was run on GSM8K only.}
\label{tab:network}
\setlength{\tabcolsep}{3pt}
\begin{tabular}{lrrrrrr}
\toprule
 & & \multicolumn{2}{c}{kB} & \multicolumn{3}{c}{Round trip (ms)} \\
Design & Req. & Up & Down & 20 & 100 & 200 \\
\midrule
\multicolumn{7}{l}{\emph{QA}} \\
Plain call, restricted & 1.0 & 4.8 & $<$0.1 & 0.02 & 0.10 & 0.20 \\
Plain call, full upload & 1.0 & 48.1 & $<$0.1 & 0.06 & 0.14 & 0.24 \\
One token per request ($L{=}1$) & 10.7 & 54.1 & 5.5 & 0.26 & 1.12 & 2.19 \\
\method{} ($L{=}8$) & 3.7 & 17.3 & 15.0 & 0.10 & 0.40 & 0.77 \\
\midrule
\multicolumn{7}{l}{\emph{GSM8K}} \\
Plain call, restricted & 1.0 & 0.8 & 1.5 & 0.02 & 0.10 & 0.20 \\
Plain call, full upload & 1.0 & 0.6 & 1.0 & 0.02 & 0.10 & 0.20 \\
One token per request ($L{=}1$) & 291.6 & 412.2 & 150.5 & 6.28 & 29.61 & 58.77 \\
\method{} ($L{=}8$) & 52.0 & 73.6 & 214.8 & 1.27 & 5.43 & 10.64 \\
\method{} ($L{=}16$) & 36.9 & 52.2 & 304.4 & 1.02 & 3.97 & 7.66 \\
Program, one call & 1.0 & 0.8 & 0.3 & 0.02 & 0.10 & 0.20 \\
\bottomrule
\end{tabular}
\end{table}

\begin{figure*}[tbp]\centering
\includegraphics[width=0.8\textwidth]{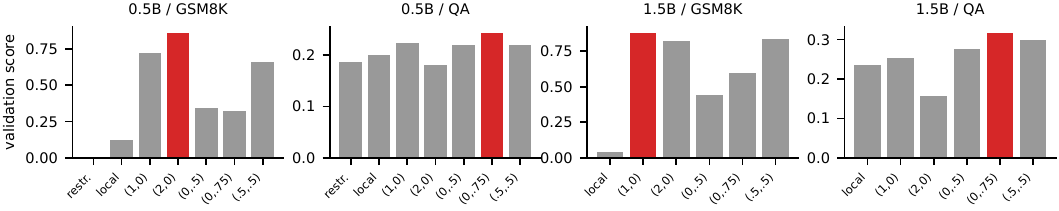}
\caption{Validation scores of the score selection for the five declared $(a,b)$ candidates and the endpoints (a single run; 50 GSM8K training problems, 50 questions each from the 2WikiMQA and MuSiQue validation splits). Red: the frozen choice. GSM8K uses the strict parser, which also explains the low local-only scores.}
\label{fig:selection}
\end{figure*}
\begin{figure}[tbp]\centering
\includegraphics[width=\columnwidth]{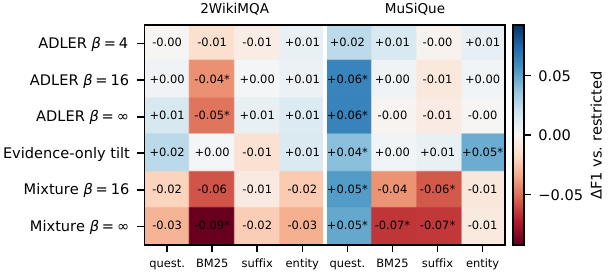}
\caption{F1 gain over the restricted cloud by public view (0.5B local, 7B cloud). $*$: the 95\%
question-bootstrap interval excludes zero.}
\label{fig:views}
\Description{Heatmap of F1 differences from the restricted cloud for six fusion methods across four
public views of 2WikiMQA and MuSiQue.}
\end{figure}
Figure~\ref{fig:views} compares the fusion methods with the restricted cloud under four public views. The tilt gains most where $V$ carries least evidence (0.06 above restricted on MuSiQue with the question only); the evidence-only variant is the only method that helps significantly under entity masking on MuSiQue, by 0.046.

\section{Score Selection on Validation Data}
\label{app:selection}
Figure~\ref{fig:selection} shows every candidate the selection saw; the highest validation score per model pair and task family wins (no ties occurred). The evidence term wins on GSM8K, and geometric pooling on QA, and the 1.5B model prefers $a=1$ to $a=2$, consistent with Proposition~\ref{prop:tilt}(b). The mixture's weight cap was chosen the same way, from $\{\tfrac14,\tfrac12,\tfrac34,1\}$ with ties to the smaller cap (0.5B, a single run): QA F1 0.195, 0.199, \textbf{0.202}, 0.180 and GSM8K exact match 0.00, \textbf{0.14}, 0.10, 0.14, against 0.243 and 0.86 for the selected tilt.

\end{document}